\newtheorem{theorem}{Theorem}
\newtheorem{prop}{Proposition}
\newtheorem{lemma}{Lemma}
\DeclareMathOperator*{\argmin}{argmin}
\title{Epidemic changepoint detection \\ in the presence of nuisance changes}
\author{Julius Juodakis, Stephen Marsland \\
        \normalsize{School of Mathematics and Statistics, Victoria University of Wellington,
        New Zealand} \\
        \normalsize{Corresponding author: \texttt{julius.juodakis@sms.vuw.ac.nz}}
}
\date{}
\begin{document}
\bibliographystyle{rss}

\maketitle
\abstract{
    Many time series problems feature epidemic changes -- segments where a parameter deviates from a background baseline. The number and location of such changes can be estimated in a principled way by existing detection methods, providing that the background level is stable and known. However, practical data often contains nuisance changes in background level, which interfere with standard estimation techniques. Furthermore, such changes often differ from the target segments only in duration, and appear as false alarms in the detection results.

    To solve these issues, we propose a two-level detector that models and separates nuisance and signal changes. As part of this method, we developed a new, efficient approach to simultaneously estimate unknown, but fixed, background level and detect epidemic changes. The analytic and computational properties of the proposed methods are established, including consistency and convergence.
    We demonstrate via simulations that our two-level detector provides accurate estimation of changepoints under a nuisance process, while other state-of-the-art detectors fail. Using real-world genomic and demographic datasets, we demonstrate that our method can identify and localise target events while separating out seasonal variations and experimental artefacts.
}

\medskip
\textit{Keywords:} Change point detection, Epidemic change points, Piecewise stationary time series, Segmentation, Stochastic gradient methods

\section{Introduction}
The problem of identifying when the probability distribution of a time series changes -- changepoint detection -- has been studied since the middle of the 20th century. Early developments, such as Shewhart's control charts \citep{Shewhart1930} and Page's CUSUM test \citep{Page1954}, stemmed from operations research. However, as automatic sensing systems and continuous data collection became more common, many new use cases for changepoint detection have arisen, such as seismic events \citep{Li2016}, epidemic outbreaks \citep{Texier2016}, physiological signals \citep{Vaisman2010}, market movements \citep{Bracker1999}, and network traffic spikes \citep{Hochenbaum2017}. Stimulated by such practical interest, the growth of corresponding statistical theory has been rapid, as reviewed in e.g. \citet{Aminikhanghahi2016} and \citet{Truong2018}.

Different applications pose different statistical challenges. If a single drastic change may be expected, such as when detecting machine failure, the goal is to find a method with minimal time to detection and a controlled false alarm rate \citep{Lau2019}. In many other applications the number of changepoints is unknown a priori; the challenge is then to identify the change locations regardless, preferably in a computationally efficient way. Some problems, such as peak detection in sound \citep{Aminikhanghahi2016} or genetic data \citep{HockingBio2016}, feature epidemic segments -- changepoints followed by a return to the background level -- and incorporating this constraint can improve detection or simplify post-processing of ouputs.

However, current detection methods that do incorporate a background level assume it to be stable throughout the dataset \citep[e.g.][]{CAPA2018, aPELT2019}. This is not realistic in several common applications. In genetic analyses such as measurements of protein binding along DNA there may be large regions where the observed background level is shifted due to structural variation in the genome or technical artefacts \citep{MACS2008}. Similarly, a standard task in sound processing is to detect speech in the presence of dynamic background chatter, so-called babble noise \citep{Aurora2000}. In various datasets from epidemiology or climatology, such as wave height measurements \citep{KillickPELT}, seasonal effects are observed as recurring background changes and will interfere with detection of shorter events. Methods that assume a constant background will be inaccurate in these cases, while ignoring the epidemic structure entirely would cost detection power and complicate the interpretation of outputs.

Our goal is to develop a general method for detecting epidemic changepoints in the presence of nuisance changes in the background. Furthermore, we assume that the only available feature for distinguishing the two types of segments is their duration: this would allow analysing the examples above, which share the property that the nuisance process is slower. The most similar research to ours is that of \citet{Lau2019} for detecting failure of a machine that can be switched on, and thus undergo an irrelevant change in the background level. However, the setting there concerned a single change with specified background and nuisance distributions; in contrast, we are motivated by the case when multiple changes may be present, with only duration distinguishing their types. 
Detection then requires two novel developments: 1) rapid estimation of local background level, 2) modelling and distinguishing the two types of potentially overlapping segments.

These developments are presented in this paper as follows: after a background section we provide a new algorithm that simultaneously detects epidemic changepoints and estimates the unknown background level (Section \ref{sectshort}). The convergence and consistency of this algorithm are proved in the Supplementary Material. While this algorithm is of its own interest, we use it to build a detector that allows local variations in the background, i.e., nuisance changes, in Section \ref{sectmethod}, again with proven properties. In Section \ref{sectexperiments} we test various properties of the algorithms with simulations before showing how the proposed nuisance-robust detector can be applied in two problems: detecting histone modifications in human genome, while ignoring structural variations, and detecting the effects of the COVID-19 pandemic in Spanish mortality data, robustly to seasonal effects. Compared to state-of-the-art methods, the proposed detector produced lower false-alarm rates (or more parsimonious models), while retaining accurate detection of true signal peaks.

\section{Background}

In the general changepoint detection setup, the data consists of a sequence of observations $x_{0:n} = \{x_0, x_1, \dots, x_n\}$, split by $k$ changepoints $0<\tau_1<\tau_2<\dots\tau_k<n$ into $k+1$ segments. Within each segment the observations are drawn from a distribution specific to that segment; these are often assumed to belong to some fixed family $f_{\theta}(x)$, with different values of parameter $\theta$ for each segment. The most common and historically earliest example is the change in mean of a Gaussian (see \citet{Truong2018}), i.e., for each $t \in [\tau_i, \tau_{i+1})$, $x_t \sim \mathcal{N}(\mu_i, \sigma^2)$, for known, fixed $\sigma^2$. (We assume $\theta \in \mathbb{R}^1$ to keep notation clearer, but multidimensional problems are also common, such as when both the mean and variance of a Gaussian are subject to change.)

While early research focused on optimal hypothesis testing of a single change against no change in a parameter, modern models aim to estimate the position of all changepoints $\{ \tau_i \}$ in a given time series, and optionally the vector of segment parameters $\bm{\theta}$. A common approach is to use a penalised cost: choose a segment cost function $C(x_{a:b}; \theta)$, usually based on negative log-likelihood of the observations $x_{a:b}$ given $\theta$, and a penalty $p(k)$ for the number of changepoints $k$. The full cost of $x_{\tau_0:\tau_k}$ (where $\tau_0 = 0$ and $\tau_k = n$) is then:
\begin{equation}
    F(n; \bm{\tau}, \bm{\theta}, k) = \sum_{i=0}^{k-1} C(x_{\tau_i+1:\tau_{i+1}}; \theta_i) + p(k).
\label{eq:F}
\end{equation}
Changepoint number and positions are estimated by finding $F(n) = \min F(n; \bm{\tau}, \bm{\theta}, k)$. Such estimation has been shown to be consistent for a range of different data generation models \citep{CAPA2018, Zheng2019}.

For a discrete problem such as this, computation of the true minimum is not trivial -- a na\"ive brute force approach would require $\mathcal{O}(2^n)$ tests. Approaches to reducing this cost fall in two broad classes: 1) simplifying the search algorithm by memoisation and pruning of paths \citep{JacksonOP, KillickPELT, Rigaill2010}; 2) using greedy methods to find approximate solutions faster \citep{Fryzlewicz2014, Baranowski2019}. In both classes, there are methods that can consistently estimate multiple changepoints in linear time under certain conditions.

The first category is of more relevance here. It is based on the Optimal Partitioning (OP) algorithm \citep{JacksonOP}. Let the data be partitioned into discrete blocks $B_i: \bigcup_i B_i = \{x\}_n$, so $B_i \cap B_j = \emptyset, \forall i \neq j$. A function $V$ that maps each set of blocks $P_j=\{B_i\}$ to a cost is block-additive if:
\begin{equation}
    \forall P_1, P_2, V(P_1 \cup P_2) = V(P_1) + V(P_2).
    \label{eq:blockadd}
\end{equation}
If each segment incurs a fixed penalty $\beta = p(k)/k$, then the function $F$ defined in \eqref{eq:F} is block-additive over segments, and can be defined recursively as:
\[ F(s) = \min_t \left( F(t) + C(x_{t+1:s}) + \beta \right). \]
In OP, this cost is calculated for each $s \le n$, and thus its minimisation requires $\mathcal{O}(n^2)$ evaluations of $C$.
Furthermore, when the cost function $C$ is such that for all $a \le b < c$:
\begin{equation}
    C(x_{a:c}) \ge C(x_{a:b}) + C(x_{b+1:c})
    \label{eq:pruning}
\end{equation}
then at each $s$, it can be quickly determined that some candidate segmentations cannot be ``rescued'' by further segments, and so they can be pruned from the search space. This approach reduces the complexity to $\mathcal{O}(n)$ in the best case. It gave rise to the family of algorithms called PELT, Pruned Exact Linear Time \citep{KillickPELT, aPELT2019}. Note that OP and PELT do not rely on any probabilistic assumptions and find the exact same minimum as the global exponential-complexity search. Since the introduction of PELT, a number of alternative pruning schemes have been developed \citep{Maidstone2014}.

A variation on the basic changepoint detection problem is the identification of epidemic changepoints, when a change in regime appears for a finite time and then the process returns to the background level.
The concept of a background distribution $f_B$ is introduced, and the segments are defined by pairs of changepoints $s_i, e_i$, outside of which the data is drawn from the background model. The data model then becomes:
\begin{equation}
    f(x_t) = \begin{cases} f_S(x_t; \theta_i) & \text{ if } \exists i: s_i \le t \le e_i \\
    f_B(x_t) & \text{ otherwise}. \end{cases}
    \label{eq:epidemmodel}
\end{equation}
An equivalent of the CUSUM test for the epidemic situation was first proposed by \citet{Levin1985}, and a number of methods for multiple changepoint detection in this setting have been proposed \citep{Olshen2004, CAPA2018, aPELT2019}. These use a cost function that includes a separate term $C^0$ for the background points:
\begin{equation}
    F(n; \{(s_i, e_i)\}_k, \bm{\theta}, k) = \sum_{i=1}^k C(x_{s_i:e_i}; \theta_i) + C^0(\{x_t : t \notin \bigcup_i[s_i;e_i]\}; \theta_0) + p(k)
    \label{fullcostepid}
\end{equation}

A common choice for the background distribution is some particular ``null'' case of the segment distribution family, so that $f_B(x) = f_S(x; \theta_0)$ and $C^0(\cdot) = C(\cdot; \theta_0)$. 
However, while the value of $\theta_0$ is known in some settings (such as when presented with two copies of DNA), in other cases it may need to be estimated. Since this parameter is shared across all background points, the cost function is no longer block-additive as in \eqref{eq:blockadd}, and algorithms such as OP and PELT cannot be directly applied. 

One solution is to substitute the unknown parameter with some robust estimate of it, based on the unsegmented series $x_{0:n}$. The success of the resulting changepoint estimation then relies on this estimate being sufficiently close to the true value, and so the non-background data fraction must be small \citep{CAPA2018}. This is unlikely to be true in our motivating case, when nuisance changes in the background level are possible.

Another option is to define
\[ F(n; \theta_0) = \min_{k,\{(s_i, e_i)\}_k, \{\theta_i\}_k} F(n; \{(s_i, e_i)\}_k, \{\theta_i\}_k, \theta_0, k) \]
which can be minimised using OP or PELT, and then use gradient descent for the outer minimisation over $\theta_0$ \citep{aPELT2019}. The main drawback of this approach is an increase in computation time proportional to the number of steps needed for the optimisation.

\section{Detection of changepoints with unknown background parameter}
\label{sectshort}

To estimate the background parameter and changepoints efficiently, while allowing a large proportion of non-background data, we introduce Algorithm \ref{algshort}. The algorithm makes a pass over the data, during which epidemic segments are detected in a standard way, i.e., by minimising penalized cost using OP (steps 3-12). The estimate of parameter $\theta_0$ is iteratively updated at each time point. We will show that this process converges almost surely to the true $\theta_0$ for certain models of background and segment levels, or reaches its neighbourhood $(\theta_0-\epsilon; \theta_0+\epsilon)$ under more general conditions.

\begin{algorithm}
\caption{Detection of changepoints with unknown background parameter}
\label{algshort}
    \begin{algorithmic}[1]
        \STATE Input: $C^0, C, \beta, x_{0:n}$
        \STATE Initialize $F(0) = 0, B(0)=\{x_0\}, \theta_0 = x_0$
        \FOR{$t \in 1,\dots,n$}
        \STATE Calculate $F_B = F(t-1) + C^0(x_t; \theta_0)$, and $F_S = \min_{1 \le k \le l} F(t-k) + C(x_{t-k+1:t}) + \beta$
            \IF{$F_B < F_S$}
                \STATE Assign $B(t) = B(t-1) \cup x_t$, and recalculate $\theta_0$ from $B$
                \STATE $F(t) = F_B$
            \ELSE
            \STATE Assign $B(t) = B(t-k)$ (here $k$: $\argmin F_S$ in step 4)
                \STATE $F(t) = F_S$
            \ENDIF
        \ENDFOR
        \STATE Repeat steps 3-12, without updating $\theta_0$
        \STATE Output: $\theta_0$, changepoint positions
    \end{algorithmic}
\end{algorithm}

The algorithm then makes a second pass over the data (step 13), repeating the segmentation using the final estimate of $\theta_0$. The step is identical to a single application of a known-background changepoint detector, such as the one described in \citet{CAPA2018}. Its purpose is to update the changepoint positions that are close to the start of the data and so had been determined using less precise estimates of $\theta_0$. This simplifies the theoretical performance analysis, but an attractive option is to use this algorithm in an online manner, without this step. We evaluate the practical consequences of this omission later in Section \ref{sectexperiments}.

By segmenting and estimating the background simultaneously, this algorithm is computationally more efficient than dedicated optimisation over possible $\theta_0$ values as in \citet{aPELT2019}, while allowing a larger fraction of non-background points than using a robust estimator on unsegmented data, as in \citet{CAPA2018}. 

We now demonstrate that Algorithm \ref{algshort} will converge almost surely under certain constraints, and is consistent. 

\subsection{Convergence}

\begin{theorem}
    \label{thm:algconverges}
    Consider the problem of an epidemic change in mean, with data $x_{0:n}$ generated as in \eqref{eq:epidemmodel}. Assume the distribution $f_B(x)$ and marginal distribution of segment points $f_S(x) = \int f_S(x; \theta) f_\Theta(\theta) d\theta$ are symmetric and strongly unimodal, with unknown background mean $\theta_0$, and that data points within each segment are iid. Here, $f_\Theta$ is the probability density of parameter $\theta$ corresponding to each data point. Denote by $w_t$ the estimate of $\theta_0$ obtained by analysing $x_{0:t}$ by Algorithm \ref{algshort}. The sequence $\{w_t\}$ converges:
    \begin{enumerate}
        \item to $\theta_0$ almost surely, if $\mathbb{E}f_S = \theta_0$.
        \item to a neighbourhood $(\theta_0 - \epsilon, \theta_0 + \epsilon)$ almost surely, where $\epsilon \rightarrow 0$ as the number of background points $n$ between successive segments $n\rightarrow \infty$.
    \end{enumerate}
\end{theorem}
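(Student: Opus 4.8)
The plan is to treat the online recomputation of $\theta_0$ as a stochastic-approximation recursion and to study its mean field. Writing $n_t=|B(t)|$ and $w_t$ for the value produced in step 6, the least-squares background cost $C^0(x;\theta)=(x-\theta)^2$ is minimised by the running mean of $B(t)$, so each accepted point gives $w_t=w_{t-1}+n_t^{-1}(x_t-w_{t-1})$, while rejected points (and the reversion $B(t)=B(t-k)$ in step 9) leave the estimate governed by the points ultimately labelled background (for a general log-concave likelihood the same role is played by the M-estimator solving $\sum_{x\in B(t)}\psi(x-\theta)=0$ with the odd, increasing score $\psi=-(\log f_B)'$). This is a Robbins--Monro iteration with state-dependent increments: whether $x_t$ is accepted depends on the sign of $F_B-F_S$, which itself depends on $w_{t-1}$. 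The whole difficulty is thus the feedback between the estimate and the classification, and I would organise the proof around the expected increment $h(w)=\mathbb{E}[(x_t-w)\,\mathbf{1}\{\text{accept}\mid w_{t-1}=w\}]$.

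First I would show that the acceptance event is effectively local. Because $F_B-F_S=[F(t-1)-\min_k F(t-k)]+(x_t-w)^2-\min_k[C(x_{t-k+1:t})+\beta]$ and the bracketed difference is a sum of per-point optimal costs over a window of length at most $l$, it is bounded in expectation; hence $x_t$ is accepted as background essentially when $|x_t-w|$ lies below a random, but tight, threshold, so the acceptance region is an interval around $w$. I would then evaluate $h(w)$ on this interval. For a true background point, $x_t-\theta_0$ is symmetric about $0$, the indicator of a symmetric window about $w$ is even in $x_t-w$, and strong unimodality makes the truncated first moment a strictly decreasing function of $w$ that vanishes exactly at $w=\theta_0$; this is the restoring drift. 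The contribution of any misclassified segment point is computed the same way from $f_S$.

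For part (1), $\mathbb{E}f_S=\theta_0$ means the segment contribution also vanishes at $w=\theta_0$ by the identical symmetry argument, so $h$ has its unique stable zero at $\theta_0$. I would verify the Robbins--Monro conditions $\sum_t\gamma_t=\infty$ and $\sum_t\gamma_t^2<\infty$ with $\gamma_t=n_t^{-1}$ by first proving that a positive asymptotic fraction of points is accepted, so that $n_t$ grows linearly in $t$ almost surely; together with square-integrability of the increments (finite variance of $f_B$ and $f_S$) and the sign condition $(w-\theta_0)h(w)<0$, a standard Lyapunov/ODE convergence theorem gives $w_t\to\theta_0$ almost surely. For part (2), $\mathbb{E}f_S\neq\theta_0$ shifts the zero of $h$ to some $\theta_0+\delta$, and $|\delta|$ is bounded by $|\mathbb{E}f_S-\theta_0|$ times the asymptotic fraction of accepted points that are genuine segment points. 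Once $w$ is near $\theta_0$ only the finitely many points adjacent to each segment edge can be misclassified, so this fraction is $O(1/n)$ in the number $n$ of background points separating consecutive segments; hence $\epsilon=|\delta|\to0$ as $n\to\infty$, and the same machinery gives a.s. convergence of $w_t$ into $(\theta_0-\epsilon,\theta_0+\epsilon)$.

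The main obstacle will be controlling the feedback rigorously: the increments are neither independent nor identically distributed, and the learning rate $\gamma_t=n_t^{-1}$ is itself random and coupled to past classifications. I expect the crux to be a confinement argument showing that $w_t$ almost surely stays in a compact neighbourhood of $\theta_0$ on which the drift remains restoring and the acceptance rate remains bounded below, so that $n_t$ cannot grow sublinearly and the ODE method applies; establishing that the iterate cannot make a large excursion that raises the misclassification rate enough to overturn the sign of $h$ is the delicate point, and I would prove it by a supermartingale/stopping-time estimate on $(w_t-\theta_0)^2$.
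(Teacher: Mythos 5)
Your proposal follows essentially the same route as the paper's proof. You cast the update of $\theta_0$ as a stochastic-approximation recursion with step size $n_t^{-1}$ and prove convergence from three ingredients: a restoring drift (your sign condition $(w-\theta_0)h(w)<0$, which is the paper's ``convexity'' condition \eqref{eq:condconv}), the learning-rate conditions \eqref{eq:condlr}, and bounded second moments \eqref{eq:condeigen}; the paper invokes the theorem of \citet{bottou98} where you invoke Robbins--Monro/ODE arguments, but this is the same machinery. Your central observation --- that the points accepted as background follow a truncation of a symmetric, strongly unimodal density to a window around $w$, whose truncated mean therefore lies strictly between $w$ and $\theta_0$ --- is exactly the paper's Lemma \ref{le:meantr}, and your part (1) is the paper's part (1). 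For part (2) you organize things differently: you bound the shift $\delta$ of the zero of the mean field by the fraction of accepted points that are segment points, which is $O(1/n)$, whereas the paper subsamples the iterates at segment--background cycle boundaries, shows the cycle-level recursion $w'_{i+1}=w'_i-\gamma'_i H'$ satisfies \eqref{eq:condconv}, and quantifies the residual neighbourhood by first-hitting-time/supermartingale bounds (Section \ref{sectmartingales}). Both rest on the same fact (at most one bounded segment per $n$ background points), and your shifted-zero formulation is arguably cleaner, at the price of requiring the confinement argument you flag at the end --- which is precisely what the paper's cycle construction packages into Bottou's conditions.

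The one point where your proposal underestimates the difficulty, and which I would count as a genuine gap, is the assertion that the acceptance region ``is an interval around $w$.'' That is exact only when a single window length $k=1$ is tested, giving \eqref{eq:tru}. Under the full rule of step 4 of Algorithm \ref{algshort}, the accepted set is the intersection $\bigcap_{k \le l} S_k$, and nothing guarantees that the sets $S_k$ for $k\ge 2$ cut symmetrically around $w$: by \eqref{eq:acconx} they depend on the running means $\bar{x}_{-t}$ of the preceding points, so they can remove points from the symmetric interval $S_1$ preferentially on one side and could, in principle, reverse the sign of the drift. The bulk of the paper's proof is devoted exactly to this: decomposing the accepted set as $X_a = X_1 \setminus X_r$, bounding the rejection probability $P_r$ via \eqref{eq:Prejbound}, and showing that for $n$ (hence $\beta$) large enough the asymmetry contributed by $X_r$ --- including the worst case $\mathbb{E}X_r = w_t - p$ --- is dominated by the restoring drift of the symmetric truncation, using the variance identity for truncated exponential-family distributions. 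Without a comparison of this kind, the sign condition on your $h(w)$ is not established, so the sentence about a ``random, but tight, threshold'' has to be replaced by quantitative work. A second, more minor, imprecision: in part (2) you justify the $O(1/n)$ fraction by saying only points ``adjacent to each segment edge'' can be misclassified; misclassification is governed by the value of $x_t$ falling in the acceptance window, not by adjacency, and the correct (and sufficient) statement is simply that at most one segment of length $\le l$ occurs per $n$ background points.
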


We refer the reader to Appendix~\ref{app:convergence} for the proof. It is based on a result by \citet{bottou98}, who established conditions under which an online minimisation algorithm almost surely converges to the optimum (including stochastic gradient descent as a special case). We show that these conditions are either satisfied directly by the updating process in our algorithm, or longer update cycles can be defined that will satisfy the conditions, in which case the convergence efficiency is determined by $n$.

\subsection{Consistency}
The changepoint model can be understood as a function over an interval that is sampled to produce $n$ observed points. As the sampling density increases, more accurate estimation of the number and locations of changepoints is expected; this property is formalised as consistency of the detector.
\citet{CAPA2018} showed that detectors based on minimising penalised cost are consistent in the Gaussian data case, and their result can be adapted to prove consistency of Algorithm \ref{algshort}. We will use the strengthened SIC-type penalty $\alpha \log(n)^{1+\delta}, \delta>0$ as presented in \citet{CAPA2018}, but a similar result can be obtained with $\delta=0$.

Additionally, the minimum signal-to-noise ratio at which segments can be detected is formalised as this assumption:
\begin{equation}
    \forall i, (e_i-s_i)\Delta_i > \log(n)^{1+\delta},
    \label{eq:snrassumpt}
\end{equation}
where $\Delta_i$ is a function of the relative strength of the parameter changes $|\mu_i - \mu_0|$ and $\sigma_k/\sigma_0 + \sigma_0/\sigma_k$.

\begin{theorem}
    \label{thm:consist}
    Let the data $x_{0:n}$ be generated from an epidemic changepoint model as in \eqref{eq:epidemmodel}, with $f_B$ and $f_S$ Gaussian, and the changing parameter $\theta$ is either its mean or variance (assume the other parameter is known). Further, assume \eqref{eq:snrassumpt} holds for $k$ changepoints. Analyse the data by Algorithm \ref{algshort} with penalty $\beta = \alpha \log(n)^{1+\delta}$, $\alpha,\delta>0$.
    The estimated number and position of changepoints will be consistent, i.e. $\forall \epsilon>0, n>B$:
    \begin{equation}
        P\left(\hat{k} = k, |\hat{s_i}-s_i| < \frac{A}{\Delta_i} \log(n)^{1+\delta}, |\hat{e_i}-e_i| < \frac{A}{\Delta_i} \log(n)^{1+\delta}, \forall 1\le i \le k \right) \ge 1 - Cn^{-\epsilon},
        \label{eq:consist}
    \end{equation}
    for some $A, B, C$ that do not increase with $n$.
\end{theorem}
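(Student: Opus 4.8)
The plan is to establish consistency of Algorithm~\ref{algshort} by reducing the problem to the known-background case analysed in \citet{CAPA2018}, with the extra complication that $\theta_0$ is estimated rather than given. The overall strategy decomposes the claim into two layers. First, I would condition on a high-probability event $\mathcal{G}$ on which the estimated background mean $w_n$ from the first pass is close to the true $\theta_0$; Theorem~\ref{thm:algconverges} already guarantees almost-sure convergence of $\{w_t\}$, so I would extract a quantitative rate from its proof (or from the concentration of the underlying stochastic-approximation recursion) to show that $P(|w_n - \theta_0| > \eta_n) \le C' n^{-\epsilon}$ for a suitable tolerance $\eta_n$. Since the final segmentation is produced by the second pass (step~13), which is \emph{exactly} a single run of a known-background detector using the fixed estimate $w_n$, on $\mathcal{G}$ the algorithm behaves like the \citet{CAPA2018} detector applied with a slightly perturbed background parameter.

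The second layer is then to port the CAPA consistency argument and verify it is robust to this perturbation. I would follow the standard penalised-cost template: (i) a \emph{no-over-segmentation} bound showing that spurious short segments cannot lower the cost because the penalty $\beta = \alpha\log(n)^{1+\delta}$ dominates the random cost fluctuations of a background stretch --- here the Gaussian tail bounds of \citet{CAPA2018} on $C^0(x_{a:b}; w_n) - C^0(x_{a:b}; \theta_0)$ must be controlled uniformly over all $\mathcal{O}(n^2)$ candidate intervals, using a union bound against the $n^{-\epsilon}$ budget; (ii) a \emph{detection} bound showing that every true segment satisfying the signal-to-noise assumption~\eqref{eq:snrassumpt} is detected, because the cost reduction from fitting the true segment, of order $(e_i-s_i)\Delta_i > \log(n)^{1+\delta}$, exceeds the penalty plus the perturbation error; and (iii) a \emph{localisation} bound giving $|\hat{s_i}-s_i|, |\hat{e_i}-e_i| < (A/\Delta_i)\log(n)^{1+\delta}$, obtained by comparing the cost of the estimated boundary against the truth and showing any larger displacement incurs a provably higher cost with high probability. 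Assembling these three events and intersecting with $\mathcal{G}$ yields~\eqref{eq:consist} via a final union bound, choosing the constants $A,B,C$ so that the aggregate failure probability stays below $Cn^{-\epsilon}$.

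The main obstacle I anticipate is propagating the background-estimation error cleanly through the cost comparisons. Replacing $\theta_0$ by $w_n$ shifts every background cost term $C^0(\cdot; w_n)$ relative to $C^0(\cdot; \theta_0)$, and for the variance-change case this perturbation enters nonlinearly; I must show that choosing $\eta_n$ to decay slightly faster than the localisation scale makes the induced perturbation $o(\log(n)^{1+\delta})$ uniformly over all candidate segments, so that it is absorbed into the constants without degrading the exponent $\epsilon$. The delicate point is that $w_n$ is itself data-dependent and correlated with the very observations used in the cost terms, breaking the independence that a naive union bound would exploit. I would handle this either by a decoupling argument --- using the first-pass estimate only through the event $\{|w_n - \theta_0| \le \eta_n\}$ and then taking a worst-case supremum over all fixed $\theta$ in that window, so that the second-pass analysis conditions on a deterministic parameter --- or by appealing to the separation between the two passes to argue approximate independence. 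Once the estimation error is shown to be uniformly negligible at the $\log(n)^{1+\delta}$ scale, the remainder reduces to a careful but routine adaptation of the Gaussian concentration bounds underlying the \citet{CAPA2018} consistency theorem.
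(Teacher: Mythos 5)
Your proposal matches the paper's proof in both structure and substance: the paper likewise extracts a quantitative high-probability bound $|\hat{\mu}-\mu_0| = O\bigl(\sqrt{\log n / n}\bigr)$ (and the analogous bound for $\hat{\sigma}^2$) on the Algorithm~\ref{algshort} background estimates from a stochastic-gradient concentration theorem (Theorem 7.5 of \citet{SgdTailProb2018}, using 1-strong convexity of the Gaussian log-likelihood), and then feeds these bounds into the perturbation-robust consistency machinery of \citet{CAPA2018} (their Lemma 9 and Proposition 8), finishing with Boole's inequality. The only real difference is presentational: where you propose re-deriving the over-segmentation/detection/localisation bounds and resolving the data-dependence of $w_n$ by a worst-case supremum over the error window, the paper simply cites CAPA's results, which are already stated uniformly over any estimates satisfying the error-bound event --- precisely the decoupling you describe.
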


The proof is given in Appendix~\ref{app:consistency}. We use the connection between our Algorithm \ref{algshort} and stochastic gradient descent to establish error bounds on the backgorund parameter estimates. These bounds then allow us to apply a previous consistency result \citep{CAPA2018} to our case.

\subsection{Pruning}

Much of the improvement in performance of change estimation algorithms comes from pruning of the search space. Standard pruning \citep{KillickPELT} can be applied to Algorithm \ref{algshort}. It continues to find optimal solutions in this case, as is shown in Proposition~\ref{prop:pruning}.

To implement pruning, the search set for $F_S$ in step 4 would be changed to:
\[ F_S = \min_{t' \in K}(F(t') + C(x_{t'+1:t}) + \beta), \]
and step 11a would be introduced to update $K$ as:
\begin{equation*}
    \text{11a. } K = K \cap \{ s : F(s) + C(x_{s+1:t}) < F(t),~ t+1-l \le s < t \} \cup \{t\}. \\
\end{equation*}

\begin{prop}
\label{prop:pruning}
    Assume a cost function $C$ such that \eqref{eq:pruning} applies. If, for some $s > t-l$:
    \[F(s) + C(x_{s+1:t}) \ge F(t), \]
    then $t-k$ will not be a segment start point in the optimal solution, and can be excluded from consideration for all subsequent $t'>t$.
\end{prop}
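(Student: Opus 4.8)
The plan is to adapt the standard domination argument that justifies PELT pruning \citep{KillickPELT}, showing that the candidate flagged by the hypothesis can never be needed to attain the optimum for any later endpoint, so that discarding it leaves every value of $F$ unchanged. Throughout I would treat the segment cost $C$ and penalty $\beta$ as fixed, so $F(\cdot)$ denotes the optimal cost returned by the recursion; the pruning concerns only the segment branch $F_S$, since the background branch $F_B = F(t-1) + C^0(x_t;\theta_0)$ in step 4 uses only $F(t-1)$ and does not range over the candidate set $K$. I also use $t-l < s < t$, as imposed by the window in step 11a.

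First I would fix a future endpoint $t'' > t$ and compare two ways of closing off the data at $t''$ with a final segment: one whose segment starts at $s+1$, costing $F(s) + C(x_{s+1:t''}) + \beta$, and one whose segment starts at $t+1$, costing $F(t) + C(x_{t+1:t''}) + \beta$. Applying the subadditivity hypothesis \eqref{eq:pruning} with $a=s+1$, $b=t$, $c=t''$ (valid since $s+1 \le t < t''$) gives $C(x_{s+1:t''}) \ge C(x_{s+1:t}) + C(x_{t+1:t''})$, and substituting the assumed inequality $F(s) + C(x_{s+1:t}) \ge F(t)$ yields
\[ F(s) + C(x_{s+1:t''}) + \beta \;\ge\; F(t) + C(x_{t+1:t''}) + \beta. \]
Hence the path whose last segment starts at $s+1$ is never strictly cheaper than the one starting at $t+1$, so $s$ is never required to attain the minimum of $F_S$ at $t''$ and may be removed from $K$.

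The one point that requires care beyond the textbook argument is the maximum-segment-length constraint $1 \le k \le l$ in step 4, which restricts which endpoints may legitimately invoke a given candidate. The candidate $s$ is admissible only while $t'' - s \le l$, i.e. for $t'' \le s + l$; since $s < t$ this forces $t'' < t + l$, so that $t$ itself satisfies $t'' - t < l$ and remains an admissible comparison candidate over exactly the range of endpoints for which $s$ could have been used. I therefore expect this bookkeeping --- confirming that the dominating point $t$ stays inside the length-$l$ window for every endpoint that could have invoked $s$ --- to be the main obstacle, since the core inequality chain above is routine. Finally, because the domination relation is transitive, the conclusion is unaffected if $t$ is itself pruned at a later step (it would only be pruned by a point that in turn dominates it), so excluding $s$ from consideration for all $t' > t$ is safe.
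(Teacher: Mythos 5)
Your proof is correct and takes essentially the same route as the paper's: the standard PELT domination chain $F(s) + C(x_{s+1:t''}) + \beta \ge F(s) + C(x_{s+1:t}) + C(x_{t+1:t''}) + \beta \ge F(t) + C(x_{t+1:t''}) + \beta$, obtained from the hypothesis together with condition \eqref{eq:pruning}, plus the observation that the length constraint excludes $s$ for endpoints beyond $s+l$. Your explicit check that the dominating candidate $t$ remains admissible (since $t'' \le s+l < t+l$) is exactly the bookkeeping the paper handles by splitting into the cases $t' \le s+l$ and $t' > s+l$, so the two arguments coincide.
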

\begin{proof}
    For all $t'\le s+l$, the proof applies as for other pruned algorithms \citep{KillickPELT}:
    \[ F(t) + C(x_{t+1:t'}) + \beta \le F(s) + C(x_{s+1:t}) + C(x_{t+1:t'}) + \beta \le F(s) + C(x_{s+1:t'}) + \beta. \]
    For $t' > s+l$, segment $(s, t')$ will exceed the length constraint and thus cannot be part of the final segmentation.
\end{proof}

\section{Detecting changepoints with a nuisance process}
\label{sectmethod}

\subsection{Problem setup}
In this section, we consider the changepoint detection problem when there is an interfering nuisance process. We assume that this process, like the signal, consists of segments, and we index the start and end times as $s^N_j, e^N_j$. Data within these segments is generated from a nuisance-only distribution $f_N$, or from some distribution $f_{NS}$ if a signal occurs at the same time. In total, four states are possible (background, nuisance, signal, or nuisance and signal), so the overall distribution of data points is:
\begin{equation}
    f(x_t) = \begin{cases} f_{NS}(x_t; \theta_i, \theta^N_j) & \text{ if } \exists i,j: t \in [s_i; e_i] \cap [s^N_j, e^N_j] \\
        f_S(x_t; \theta_i) & \text{ if } \exists i: t \in [s_i, e_i], t \notin \cup_j [s^N_j, e^N_j] \\
        f_N(x_t; \theta^N_j) & \text{ if } \exists j: t \in [s^N_j, e^N_j], t \notin \cup_i [s_i, e_i] \\
        f_B(x_t) & \text{ otherwise} \end{cases}
    \label{eq:nuisancemodel}
\end{equation}

We add two more conditions to ensure identifiability: 1) the nuisance process evolves more slowly than the signal process, so $\min(e^N_j-s^N_j) > \max(e_i-s_i)$; 2) signal segments are either entirely contained within a nuisance segment, or entirely out of it:
\begin{equation}
    \forall i,j, \text{ either } [s_i, e_i] \subset (s^N_j, e^N_j), \text{ or } [s_i, e_i] \cap [s^N_j, e^N_j] = \emptyset
    \label{eq:condnooverlap}
\end{equation}

We adapt the changepoint detection method as follows. Define $X_S = \bigcup_i x_{s_i:e_i}, X_N = \bigcup_j x_{s^N_j:e^N_j}$, a penalty $p'(m) = \beta'm$ for the number of nuisance segments, and cost functions $C^{NS},~C^S,~C^N,~C^0$ corresponding to each of the distributions in \eqref{eq:nuisancemodel}. Then the full cost of a model with $k$ true segments and $m$ nuisance segments is:
\begin{multline}
    \label{eq:fullcostnuis}
    F(n; \{(s_i, e_i)\}_k, k, \{(s^N_j, e^N_j)\}_m, m, \bm{\theta}) = C^0(x_{0:n} \setminus (X_S \cup X_N) ) + \sum_{i=0}^k C^S(x_{s_i:e_i} \setminus X_N; \theta_i) + \\
    + \sum_{j=0}^m \left( C^N(x_{s^N_j:e^N_j} \setminus X_S; \theta_j) + \sum_{i=0}^k C^{NS}(x_{s^N_j:e^N_j} \cap x_{s_i:e_i}; \theta_i, \theta_j) \right) + \beta k + \beta' m.
\end{multline}

Note that this cost can also be expressed, using $k_j$ as the number of signal segments that overlap a nuisance segment $j$ and $k_0 = k - \sum k_j$, as:
\begin{multline*}
    F(n; \{(s_i, e_i)\}_k, k, \{(s^N_j, e^N_j)\}_m, m, \bm{\theta})
    = C^0(x_{0:n} \setminus (X_S \cup X_N) ) + \sum_{i=0}^{k_0} C^S(x_{s_i:e_i} \setminus X_N; \theta_i) + \\
     + \beta k_0 + \sum_{j=0}^m (C'(x_{s^N_j:e^N_j}) + \beta k_j + \beta').
\end{multline*}
Condition \eqref{eq:condnooverlap} ensures that $C'$ and $C^S$ are independent (no points or $\theta_i$ are shared between them), so $F$ is block-additive over (signal or nuisance) segments and can be minimised by dynamic programming.

\subsection{Proposed method}
To minimise this cost, we propose the method outlined in Algorithm \ref{algfull}.
Its outer loop proceeds over the data to identify segments by the usual OP approach. However, to evaluate the cost $C'(x_{a:b})$, which allows segment and nuisance overlaps, a changepoint detection problem with unknown background level must be solved over $x_{a:b}$. This is achieved by an inner loop using Algorithm \ref{algshort}. If $C'=C^S$, this method would reduce to a standard detector of epidemic changepoints, with the exception that segments are separated into two types depending on their length; this is very similar to the collective and point anomaly (CAPA) detector in \citet{CAPA2018}.

The cost $C'$ is minimised using Algorithm \ref{algshort}, so by Theorem \ref{thm:consist}, the number of positions of true segments will be estimated consistently given accurate assignment of the $m$ nuisance positions $s^N_j, e^N_j$. However, the latter event is subject to a complex set of assumptions on relative signal strength, position, and duration of the segments. Therefore, we do not attempt to describe these in full here, but instead investigate the performance of the method by extensive simulations in Section \ref{simmethod}.

\begin{algorithm}
    \caption{Adaptive segmentation by changepoint detection}
\label{algfull}
\begin{algorithmic}[1]
    \STATE Input: $C^0, C^S, C^N, C^{NS}, \beta, \beta', x_{0:n}$
    \STATE Initialize $F(0) = 0, \theta_0 = x_0$, lists of tuples $chp_S, chp_N, chp_{NS}$
    \STATE If $\theta_0$ not known: estimate using an appropriate robust estimator
    \FOR{$t \in 1,\dots,n$}
        \FOR{$t' \in 0,\dots,t-l-1$}
            \STATE Apply Algorithm \ref{algshort} to data $x_{t'+1:t}$, store returned cost as $C'$ and changepoints as $chp_{NS}(t')$
            \STATE $F_N(t') = F(t') + C' + \beta'$
        \ENDFOR
        \STATE $F_B = F(t-1) + C^0(x_t)$
        \STATE $F_S = \min_{1 \le k \le l} F(t-k) + C^S(x_{t-k+1:t}) + \beta$
        \STATE $F_N = \min F_N(t')$
        \STATE Assign $F(t) = \min\{F_B, F_S, F_N\}$
        \IF{$F(t) = F_B$}
            \STATE Assign $chp_S(t) = chp_S(t-1), chp_N(t)=chp_N(t-1)$
        \ELSIF{$F(t) = F_S$}
        \STATE Assign $chp_S(t) = chp_S(t-k) \cup (t-k, t)$ with $k$ determined by $\argmin F_S$, $chp_N(t)=chp_N(t-1)$
        \ELSE
        \STATE Assign $chp_S(t) = chp_S(t') \cup chp_{NS}(t')$, $chp_N(t) = chp_N(t') \cup (t', t)$ with $t'$ determined by $\argmin F_N$ 
        \ENDIF
    \ENDFOR
    \STATE Output: changepoint positions $chp_S(n) = \{(s_i, e_i)\}$ and $chp_N(n) = \{(s^N_j, e^N_j)\}$
\end{algorithmic}
\end{algorithm}

Algorithm \ref{algfull} is stated assuming that a known or estimated value of the parameter $\theta_0$, corresponding to the background level without the nuisance variations, is available. In practice, it may be known when there is a technical noise floor or a meaningful baseline that can be expected after removing the nuisance changes. Alternatively, $\theta_0$ may be estimated by a robust estimator if a sufficient fraction of background points can be assumed; in either case, the (known or estimated) value is substituted to reduce the computational cost. Otherwise, the method can be modified to estimate $\theta_0$ simultaneously with segmentation, using a principle similar to Algorithm \ref{algshort}.

\subsection{Pruning}

In the proposed method, the estimation of parameter $\mu_j$ (the mean of segment $j$) is sensitive to the segment length, therefore the cost $C'$ does not necessarily satisfy the additivity condition \eqref{eq:pruning}, and so it cannot be guaranteed that PELT-like pruning will be exact. However, we can establish a local pruning scheme that retains the exact optimum with probability $\rightarrow 1$ as $n \rightarrow \infty$.

\begin{prop}
    \label{prop:newpruning}
    Assume data $x_{0:n}$ is generated from a Gaussian epidemic changepoint model, and that the distance between changepoints is bounded by some function $A(n)$:
    \[ \forall i,j: s^N_{j+1} - e^N_j > A(n), |s_i - s^N_j| > A(n). \]

    At time $t$, the solution space is pruned by removing:
    \begin{equation}
        \label{eq:kpruned}
         \bm{k}_{pr, t} = \{k : F(k-1) + C'(x_{k:t}) \ge \min_m F(m-1) + C'(x_{m:t}) + \alpha \log(n)^{1+\delta} \}.
    \end{equation}
    Here $m\in (t-A(n); t], k \in (t-A(n); t], k \neq m$.
    Then $\forall \epsilon>0$, there exist constants $B, n_0$, such that when $n>n_0$, the true nuisance segment positions are retained with high probability:
    \[ P( \forall j: s^N_j \notin \bigcup_t \bm{k}_{pr, t} , e^N_j \notin \bigcup_t \bm{k}_{pr, t}) \ge 1-Bn^{-\epsilon}. \]
\end{prop}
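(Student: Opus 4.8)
The plan is to study, for each fixed current time $t$, the cost profile $\phi_t(k) := F(k-1) + C'(x_{k:t})$ as a function of the candidate nuisance-segment start $k$ ranging over the window $(t-A(n),t]$, and to show that a true boundary is never more than $\alpha\log(n)^{1+\delta}$ above the profile's minimum, so that it survives the test in \eqref{eq:kpruned}. I would split $\phi_t(k)$ into its mean $\mathbb{E}\phi_t(k)$ and a stochastic fluctuation, prove that the mean is essentially minimised at the true boundary, and then bound the fluctuation below $\tfrac12\alpha\log(n)^{1+\delta}$ uniformly; the separation between the $\log(n)^{1+\delta}$ penalty and the $O(\sqrt{\log n})$-type fluctuation scale is what makes the argument work.

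For the drift, I would first invoke the consistency of Algorithm \ref{algshort} (Theorem \ref{thm:consist}) to assert that, off an event of probability $O(n^{-\epsilon})$, the prefix cost $F(k-1)$ equals the penalised cost of the true segmentation of $x_{0:k-1}$ up to $O(\log(n)^{1+\delta})$, while $C'(x_{k:t})$ equals the optimal epidemic fit of $x_{k:t}$. I would then compare $\phi_t$ at the true start $k^*=s^N_j$ with a competitor $m$. The spacing hypotheses $|s_i-s^N_j|>A(n)$ and $s^N_{j+1}-e^N_j>A(n)$, together with the no-overlap condition \eqref{eq:condnooverlap}, guarantee that $x_{s^N_j:t}$ is a homogeneous nuisance stretch throughout the window. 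If $m<s^N_j$, then the segment $x_{m:t}$ forces the true-background points $x_{m:s^N_j-1}$ to be scored under the nuisance model (as over-long spikes or baseline misfit), whereas at $k^*$ the same points sit correctly in the prefix; if $m>s^N_j$ the opposite mismatch occurs, and the excised nuisance points $x_{s^N_j:m-1}$ must be paid for by an extra penalty or a baseline misfit in the prefix. In both directions the comparison yields $\mathbb{E}\phi_t(m)\ge\mathbb{E}\phi_t(k^*)$, so the true start minimises the expected profile; the same computation applied across a nuisance-to-background boundary shows the true end $e^N_j$ differs from the window minimiser only by the likelihood contribution of a single boundary point, hence by $O(1)$.

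The fluctuation is controlled by concentration. For Gaussian data the per-point increments of $\phi_t$ are sub-exponential, so Bernstein's inequality bounds $|\phi_t(k)-\mathbb{E}\phi_t(k)|$; a union bound over the at most $A(n)$ starts $k$, the at most $A(n)$ competitors $m$, the at most $n$ times $t$ at which a given boundary lies in the window, and the $k$ boundaries shows that $\sup|\phi_t(k)-\mathbb{E}\phi_t(k)|<\tfrac12\alpha\log(n)^{1+\delta}$ with probability $\ge 1-Bn^{-\epsilon}$ once $n>n_0$, because each tail is $\exp(-c\log(n)^{1+\delta})$ and therefore super-polynomially small. Combining this with the drift bound gives, at a true boundary $k_0\in\{s^N_j,e^N_j\}$, $\phi_t(k_0)-\min_m\phi_t(m)\le O(1)+2\sup|\phi_t-\mathbb{E}\phi_t|<\alpha\log(n)^{1+\delta}$, so $k_0\notin\bm{k}_{pr,t}$ at every relevant $t$, which is the claim.

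The main obstacle is that $C'$ is not an additive cost but the output of the inner run of Algorithm \ref{algshort}, so neither its mean nor its increments reduce to a clean per-point likelihood, and the approximation ``$C'(x_{k:t})\approx$ true epidemic NLL'' must hold \emph{uniformly} over the $O(n^2)$ sub-intervals on which the inner algorithm is called. I would handle this by applying Theorem \ref{thm:consist} on each such sub-interval, absorbing the resulting $O(\log(n)^{1+\delta})$ slack into the drift comparison, and relying on the fact that the per-interval failure probability is super-polynomially small (again thanks to the $\delta>0$ penalty) so that the union bound over all intervals still leaves total error $O(n^{-\epsilon})$. A secondary difficulty is verifying the sub-exponential increment bound for the estimated-mean nuisance cost rather than a known-mean cost, which requires the background-estimate error control already established in the proof of Theorem \ref{thm:consist}.
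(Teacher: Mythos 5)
Your overall strategy (drift plus concentration, balanced against the $\log(n)^{1+\delta}$ penalty) is the engine that underlies the result, but two concrete steps fail as written. First, the uniform fluctuation bound $\sup_k|\phi_t(k)-\mathbb{E}\phi_t(k)|<\tfrac12\alpha\log(n)^{1+\delta}$ is not available: the stochastic part of the comparison $\phi_t(k_0)-\phi_t(m)$ (after the common prefix and suffix cancel) is a sum over $O(|k_0-m|)$ points, i.e.\ up to $A(n)$ of them, so its typical size is of order $\sqrt{A(n)\log n}$, not the ``$O(\sqrt{\log n})$-type'' scale you invoke. The proposition places no upper bound on $A(n)$ --- the hypothesis is a \emph{lower} bound on changepoint spacing --- so whenever $A(n)$ grows faster than $\log(n)^{2(1+\delta)}$ the Bernstein tail at your threshold is not even $o(1)$, and the union bound collapses. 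The correct argument must balance the drift, which grows \emph{linearly} in $|k_0-m|$, against the fluctuation, which grows as the \emph{square root}, so that the excess of the true boundary over any competitor is $O(\log n)$ uniformly. This drift--fluctuation trade-off is precisely what the paper imports wholesale by invoking Proposition 4 of \citet{CAPA2018} (Lemma \ref{le:capa4}): conditional on a single event of probability $\ge 1-Bn^{-\epsilon}$ that is uniform over sub-intervals, the cost at the true changepoint with true parameters exceeds the MLE-fitted cost at any candidate by at most $K_1\log n$; combined with optimality of the fitted parameters, the whole comparison in \eqref{eq:kpruned} is bounded by $K_1\log n<\alpha\log(n)^{1+\delta}$ for large $n$, with no constant-tracking needed because the orders differ.

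Second, your plan to approximate $F(k-1)$ and $C'(x_{k:t})$ by applying Theorem \ref{thm:consist} on each sub-interval introduces slack of order $\log(n)^{1+\delta}$ --- the \emph{same} order as the pruning margin --- so the final inequality cannot close unless you track all the constants and show they sum to less than $\alpha$, which the proposal does not attempt; moreover, Theorem \ref{thm:consist} is a statement about the number and location of estimated changepoints, not about the value of the minimised cost, so converting it into a cost approximation requires additional work. The paper avoids both issues structurally: the spacing assumptions $s^N_{j+1}-e^N_j>A(n)$ and $|s_i-s^N_j|>A(n)$ guarantee that each pruning window $(t-A(n),t]$ contains at most one true changepoint and no signal segments near a nuisance boundary, so within the window the inner run of Algorithm \ref{algshort} degenerates and $C'$ reduces to a plain Gaussian cost $C^N(\cdot)=C(\cdot;\hat\mu,\sigma_N)$, prefix terms cancel in the comparison, and every error term is $O(\log n)$, an order strictly below the threshold. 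Your drift analysis (true boundary minimises the expected profile, with $m<s^N_j$ and $m>s^N_j$ treated by the two misallocation arguments) is sound in spirit, but it only becomes a proof once it is packaged into a uniform $O(\log n)$ bound of the Lemma \ref{le:capa4} type rather than a mean/uniform-fluctuation split.
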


The proof is given in Appendix~\ref{app:pruning}.
The assumed distance bound serves to simplify the detection problem: within each window $(t-A(n), t]$, at most 1 true changepoint may be present, and the initial part of Algorithm \ref{algfull} is identical to a standard epidemic changepoint detector. It can be shown that other candidate segmentations in the pruning window are unlikely to have significantly lower cost than the one associated with $s^N_j, e^N_j$, and therefore $s^N_j, e^N_j$ are likely to be retained in pruning.
    
As this scheme only prunes within windows of size $A(n)$, and setting large $A(n)$ may violate the true data generation model, it is less efficient than PELT-type pruning. However, assuming that the overall estimation of nuisance and signal changepoints is consistent, Proposition \ref{prop:newpruning} extends to standard pruning over the full dataset.
We show that this holds empirically in Section \ref{simmethod}.

\section{Experiments}
\label{sectexperiments}

In this section, we present the results of simulations used to evaluate the performance of the methods, and demonstrations on real-world datasets.

The methods proposed in this paper are implemented in R 3.5.3, and available on GitHub at \url{https://github.com/jjuod/changepoint-detection}. The repository also includes the code for recreating the simulations and data analyses presented here.

\subsection{Simulations}

\subsubsection{Algorithm \ref{algshort} estimates the background parameter consistently}
\label{simulationsetup1}
In the first experiment we tested the performance of the algorithm with a fixed background parameter.
Three different sets of data were generated and used:  

\textbf{Scenario 1.} Gaussian data with one signal segment: $n$ data points were drawn as $x_t \sim \mathcal{N}(\theta_t, 1)$, with a segment of $\theta_t = 3$ at times $t \in (0.3n; 0.5n]$, and $0$ otherwise (background level).  

\textbf{Scenario 2.} Gaussian data with multiple signal segments: $n$ data points were drawn as $x_t \sim \mathcal{N}(\theta_t, 1)$, with:
\begin{equation*}
    \theta_t = \begin{cases} -1 \text{ when } t \in (0.2n; 0.3n] \cup (0.7n; 0.8n] \\
        1 \text{ when } t \in (0.5n; 0.6n] \\
        0 \text{ otherwise }.
    \end{cases}
\end{equation*}

\textbf{Scenario 3.} Heavy tailed data with one signal segment: $n$ data points were drawn from the generalized $t$ distribution as $x_t \sim T(3) + \theta_t$, with $\theta_t = 2$ at times $t \in (0.2n; 0.6n]$ and $0$ otherwise (background level).

To evaluate the background level ($\theta_0$) estimation by Algorithm \ref{algshort}, we generated time series for each of the three scenarios with values of $n$ between 30 and 750, with 500 replications for each $n$. Note that the segment positions remain the same for all sample sizes, so the increase in $n$ could be interpreted as denser sampling of the time series.

Each time series was analysed using Algorithm \ref{algshort} to estimate $\theta_0$. The maximum segment length was set to $l=0.5n$ and the penalty to $\beta=3 \log(n)^{1.1}$. The cost was computed using the Gaussian log-likelihood function with known variance. This means that the cost function was mis-specified in Scenario 3, and so provides a test of the robustness of the algorithm (although the variance was set to $3$, as expected for $T(3)$).

For comparison, in each replication we also retrieved the median of the entire time series $x_{1:n}$, which is used as the estimate of $\theta_0$ by \citet{CAPA2018}. For scenarios (1) and (2), we also computed the quantiles of $\mathcal{N}(0, \sigma^2/\sqrt{n_B})$ as the oracle efficiency limit based on the CLT, with $n_B$ the total number of background points: $n_B=0.8n$ for (1) and $0.7n$ for (2).

Figure~\ref{fig:sim1theta} shows the results for all three scenarios. It can be seen that Algorithm \ref{algshort} produced consistent estimates given sufficiently long time series. When the signal-to-noise ratio was large (scenario 1), most of the estimated background values were accurate even at small $n$, and at larger $n$ our estimator reached efficiency and accuracy similar to the oracle estimator. Predictably, the median of the non-segmented data produced biased estimates, which nonetheless may be preferrable in some cases, as our estimate showed large variability at low $n$ in scenarios 2 and 3. At $n>400$, our estimator provides the lowest total error in all tested scenarios, even with the mis-specified cost function in scenario 3.

\begin{figure}[h]
    \centering
    \includegraphics[width=0.9\textwidth]{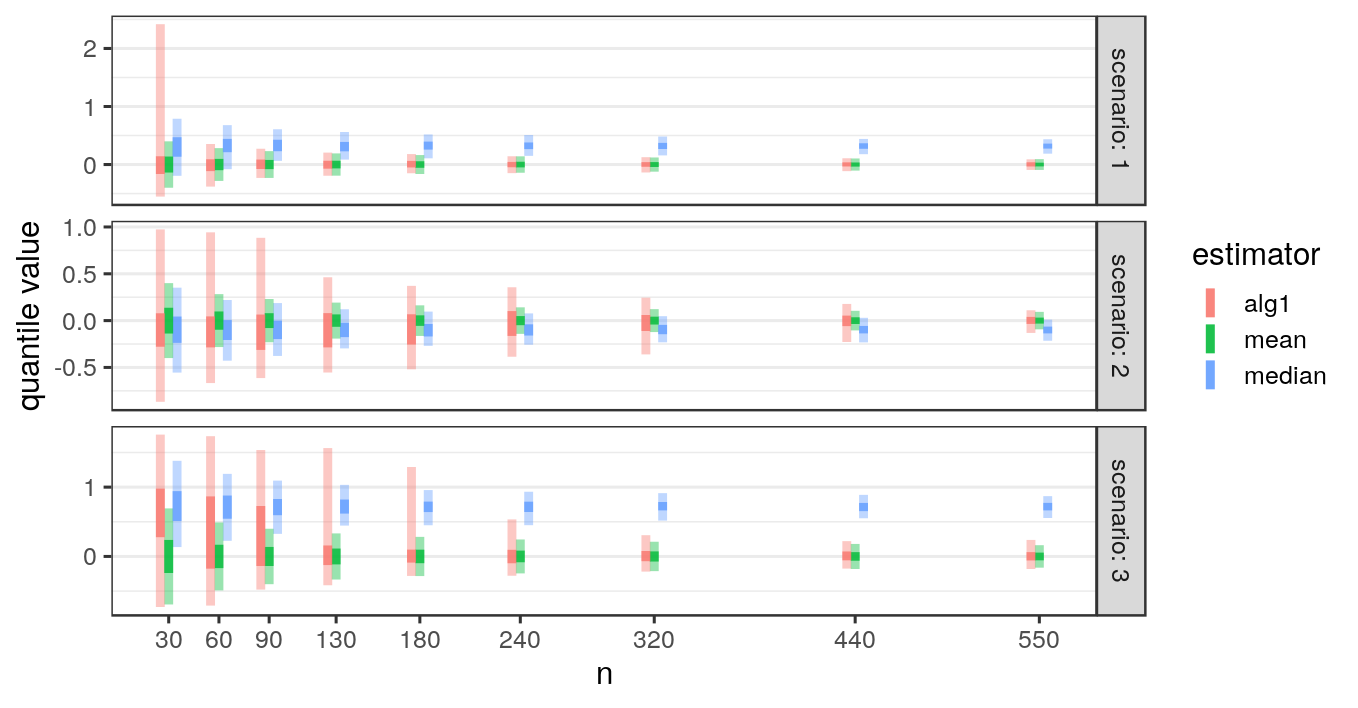}
    \caption{Consistency of the background parameter values estimations. Time series simulated in three different scenarios were analysed by Algorithm \ref{algshort} (shown in red). Lines are the inter-quartile range (solid) and 2.5-97.5\% range (faint) of the background parameter estimates observed in 500 replications at each $n$. For comparison, we also show the ranges of estimates obtained by the median (blue) and mean of true background points (green; an oracle estimator).}
    \label{fig:sim1theta}
\end{figure}

\subsubsection{Segment positions estimated by Algorithm \ref{algshort} are accurate and consistent}

The same simulation setup was used to evaluate the consistency of estimated segment number and positions. Data were generated and analysed with Algorithm \ref{algshort} as in Section \ref{simulationsetup1}, but we also retrieved the changepoint positions that were estimated in step 12. This corresponds to the online usage of the algorithm, in which segmentation close to the start of the data is based on early $\theta_0$ estimates and may therefore itself be less precise.
We extracted the mean number of segments reported by the algorithm and also calculated the true positive rate (TPR) as the fraction of simulations in which the algorithm reported at least 1 changepoint within $0.05n$ points of each true changepoint.

The results show that the TPR approaches 1 for all three scenarios (Table \ref{tab:sim1segs}). As expected, detecting a strong segment was easier than multiple weak ones: segmentation in scenario 1 was accurate even at $n=30$, while $n > 500$ was needed to reach $\ge 90\%$ TPR in scenario 2. In scenario 3, the algorithm correctly detected changes at the true segment start and end (TPR $\approx 100 \%$), but the algorithm tended to fit the segment as multiple ones, likely due to the heavy tails of the $t$ distribution. Notably, skipping step 13 had very little impact on the performance of the algorithm, suggesting that the faster online version can be safely used.

\begin{table}
    \caption{\label{tab:sim1segs} Consistency of the number and position of estimated changepoints. Time series simulated in three different scenarios were analysed using Algorithm \ref{algshort}. For each $n$, 500 replications were performed. The mean number of reported segments and the TPR (fraction of iterations when a changepoint was detected within $0.05n$ of each true changepoint) are shown. The number of true segments was 1, 3, and 1 for scenarios 1, 2, 3 respectively. The same time series were also analysed in an online manner, i.e., without Step 13 of the algorithm; the results are shown on the right of the table.}
    \centering
    \begin{tabular}[htbp]{lcrrrr}
        \toprule
        & & \multicolumn{2}{c}{(Full algorithm)}  & \multicolumn{2}{c}{(no step 13)} \\
        Scenario & $n$ & Mean \# segm. & TPR & Mean \# segm. & TPR \\  \midrule
        1  &       30 &      1.112 &    0.932 &      1.138 &    0.924 \\
        &       90 &         1.040 &    0.998 &      1.056 &    0.998 \\
        &      180 &         1.054 &    1.000 &      1.060 &    1.000 \\
        &      440 &         1.032 &    1.000 &      1.040 &    1.000 \\
        &      750 &         1.028 &    1.000 &      1.034 &    1.000 \\ \midrule
        2  &       30 &      0.552 &    0.000 &      0.584 &    0.000 \\
         &       90 &        1.096 &    0.008 &      1.050 &    0.004 \\
         &      180 &        1.762 &    0.086 &      1.720 &    0.080 \\
         &      440 &        2.900 &    0.824 &      2.854 &    0.782 \\
         &      750 &        3.010 &    0.992 &      3.008 &    0.988 \\ \midrule
        3  &       30 &      0.644 &    0.142 &      0.662 &    0.132 \\
         &       90 &        1.426 &    0.592 &      1.416 &    0.570 \\
         &      180 &        1.930 &    0.878 &      1.932 &    0.866 \\
         &      440 &        2.798 &    0.998 &      2.798 &    0.996 \\
         &      750 &        3.692 &    1.000 &      3.664 &    1.000 \\ \bottomrule
    \end{tabular}
\end{table}

\subsubsection{Algorithm \ref{algfull} recovers true signal segments under interference}
\label{simmethod}

To evaluate Algorithm \ref{algfull}, we generated time series with both signal and nuisance segments under two different scenarios:

\textbf{Scenario 1.} Gaussian data with a signal segment overlapping a nuisance segment: $n$ data points were drawn as $x_t \sim \mathcal{N}(\theta^S_t + \theta^N_t, 1)$, with $\theta^S_t = 2$ when $t \in (0.3n; 0.5n]$, 0 otherwise, and $\theta^N_t = 2$ when $t \in (0.2n; 0.7n]$, 0 otherwise.

\textbf{Scenario 2.} Gaussian data with a nuisance segment and two non-overlapping signal segments: $n$ data points were drawn as $x_t \sim \mathcal{N}(\theta^S_t + \theta^N_t, 1)$, with:
\begin{align*}
    \theta^N_t = 1 \text{ when } t \in (0.2n; 0.4n],~0 \text{ otherwise} \\
    \theta^S_t = \begin{cases} 3 \text{ when } t \in (0.5n; 0.6n] \\
        -3 \text{ when } t \in (0.7n; 0.8n] \\
        0 \text{ otherwise }
    \end{cases}
\end{align*}

Time series were generated for $n$ between 30 and 240, in 500 replications at each $n$.
Each series was analysed by three methods. Algorithm \ref{algfull} (\textbf{proposed}) was run with penalties $\beta=\beta'=3 \log(n)^{1.1}$ as before. For classical detection of epidemic changes in mean, we used the R package \textbf{anomaly} \citep{CAPA2018}; the implementation allows separating segments of length 1, but we treated them as standard segments and set all penalties to $3 \log(n)^{1.1}$. In both methods, background parameters were set to $\mu_0=0, \sigma_0=1$, and maximum segment length was $l=0.33n$ in scenario 1 and $l=0.15n$ in scenario 2. As an example of a different approach, we included the narrowest-over-threshold detector implemented in R package \textbf{not}, with default parameters. This is a non-epidemic changepoint detector that was shown to outperform most comparable methods \citep{Baranowski2019}. Since it does not include the background-signal distinction, we define signal segments as regions between two successive changepoints where the mean exceeds $\mu_0 \pm \sigma_0$.

As before, evaluation metrics were the mean number of segments and the TPR. For the proposed method, we required accurate segment type detection as well, i.e., a true positive is counted when the detector reported a nuisance start/end within $0.05n$ of each nuisance changepoint and a signal start/end within $0.05n$ of each signal changepoint. In scenario 1, we also extracted the estimate of $\theta$ corresponding to the detected signal segment (or segment closest to $(0.3n; 0.5n]$ if multiple were detected). The average of these over the 500 replications is reported as $\hat{\theta}^S$.

To evaluate the effects of pruning, Algorithm \ref{algfull} was applied without pruning, or with global pruning as in \eqref{eq:kpruned}, with $m\in (0; t-l)$ at each $t$. Only 5 out of 5000 runs (500 iterations $\times$ 10 settings) showed any differences between the globally-pruned and non-pruned methods (Supplementary Table S1 in Appendix \ref{app:suppfigs}), so we only present results obtained with pruning from here on.

We observed that the proposed method (with pruning) successfully detected true signal segments in both scenarios (Figure \ref{fig:sim2bias} and Table \ref{tab:sim2tprs}). The number of nuisance detections was accurate in scenario 1, and slightly underestimated in favour of more signal segments in scenario 2, most likely because the simulated nuisance length was close to the cutoff of $0.15n$.

\begin{figure}[h]
    \centering
    \includegraphics[width=0.7\textwidth]{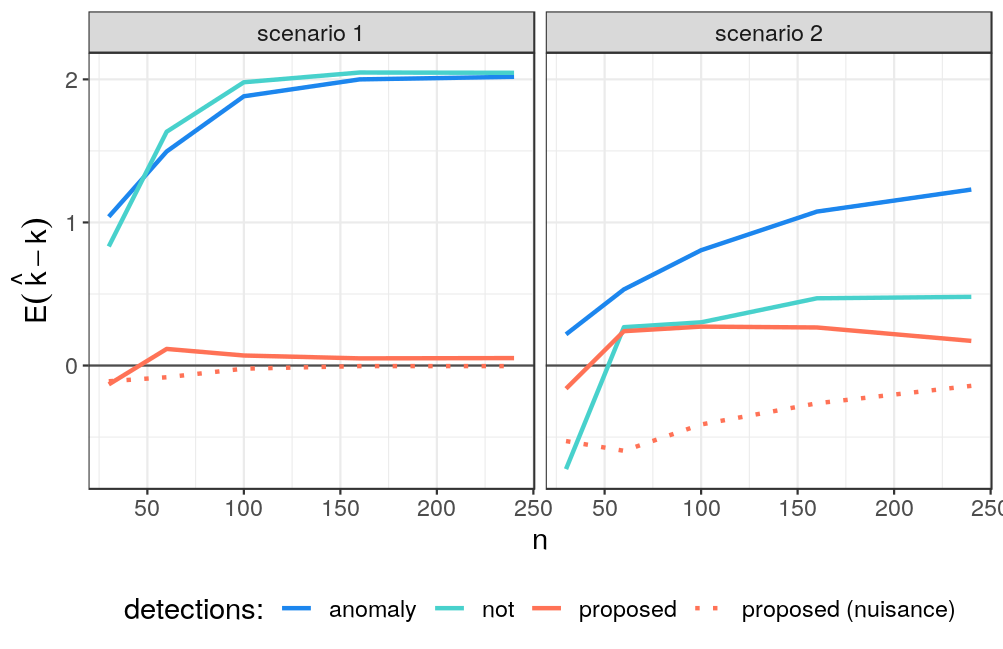}
    \caption{Relative bias in the number of changepoints estimated by the proposed Algorithm \ref{algfull} (pruned), and two alternative detectors: \textbf{anomaly} and \textbf{not}. Data simulated for two scenarios, in 500 replications for each $n$. For the proposed algorithm, bias is calculated separately in signal and nuisance segments.}
    \label{fig:sim2bias}
\end{figure}

\begin{table}
    \caption{\label{tab:sim2tprs} The true positive rate of changepoint estimation by the proposed Algorithm \ref{algfull} (pruned), and two alternative detectors: \textbf{anomaly} and \textbf{not}. Data simulated for two scenarios, in 500 replications for each $n$. The true positive rate is the fraction of iterations when a changepoint of correct type was detected within $0.05n$ of each true changepoint (types are signal, S, and nuisance, N).}
    \centering
    \begin{tabular}[htbp]{lccccc}
        \toprule
        Scenario & $n$ & Proposed (S) & Proposed (N) &  anomaly  &   not \\ \midrule
        1  &       30 &  0.444  &     0.548   &    0.382   &  0.428  \\
           &       60 &  0.742  &     0.752   &    0.532   &  0.724  \\
           &      100 &  0.938  &     0.930   &    0.870   &  0.912  \\
           &      160 &  0.984  &     0.972   &    0.982   &  0.994  \\
           &      240 &  1.000  &     0.986   &    1.000   &  0.998  \\ \midrule
        2  &       30 &  0.700  &     0.110   &    0.934   &  0.152  \\
           &       60 &  0.924  &     0.238   &    0.962   &  0.872  \\
           &      100 &  0.986  &     0.412   &    0.998   &  0.988  \\
           &      160 &  0.998  &     0.640   &    1.000   &  1.000  \\
           &      240 &  0.998  &     0.764   &    1.000   &  1.000  \\ \bottomrule
    \end{tabular}
\end{table}

As expected, when nuisance segments are not included in the model (\textbf{anomaly} and \textbf{not} methods), they were identified as multiple changepoints; as a result, the number of segments was over-estimated up to 3-fold. These models are also unable to capture the signal-specific change in mean $\theta^S$: \textbf{anomaly} estimated $\hat{\theta}^S=3.99$, and \textbf{not} estimated $\hat{\theta}^S=3.95$ in scenario 1 at $n=240$. These values correspond to the sum of the signal and nuisance effects. While the estimation is accurate and could be used to recover the value of interest by post-hoc analysis, our proposed method estimated the signal-specific change directly, as $\hat{\theta}^S=2.00$ in scenario 1 at $n=240$.


\subsection{Real-world Data}
\subsubsection{ChIP-seq}

As an example application of the algorithms proposed in this paper, we demonstrate peak detection in chromatin immunoprecipitation sequencing (ChIP-seq) data. The goal of ChIP-seq is to identify DNA locations where a particular protein of interest binds, by precipitating and sequencing bound DNA. This produces a density of binding events along the genome that then needs to be processed to identify discrete peaks.
Typically, some knowledge of the expected peak length is available to the researcher, although with considerable uncertainty \citep{ZINBA2011}. Furthermore, the background level may contain local shifts of various sizes, caused by sequencing bias or true structural variation in the genome \citep{MACS2008}.
The method proposed in this paper is designed for such cases, and can potentially provide more accurate and more robust detection.

We used datasets from two ChIP-seq experiments, investigating histone modifications in human immune cells.
Broad Institute H3K27ac data was obtained from UCSC Genome Browser, GEO accession GSM733771, as mean read coverage in non-overlapping windows of 25 bp. While ground truth is not available for this experiment, we also retrieved an input control track for the same cell line from UCSC (GEO accession GSM733742).
We analysed a window near the centromere of chromosome 1, between 120,100,000 to 120,700,000 bp. This window contains nuisance variation in the background level, as seen in the input control (Figure \ref{fig:chipbroad}, top). To improve runtime, data was downsampled to approximately 1000 points, each corresponding to mean coverage in a window of 500 bp.
All read coordinates in this paper correspond to hg19 genome build.

The second dataset was the UCI/McGill dataset, obtained from \url{https://archive.ics.uci.edu/ml/datasets/chipseq}. This dataset was previously used for evaluating peak detection algorithms \citep{HockingBio2016, Hocking2018}. Mean read coverage at 1 bp resolution is provided, as well as peak annotations based on visual inspection. The annotations are weak labels in the sense that they indicate peak presence or absence in a region, not their exact positions, to acknowledge the uncertainty when labelling visually. 
From this data, we used the H3K36me3 modification in monocyte sample ID McGill0104, AM annotation. Around the labels $L=\{(s_i, e_i)\}$ in each chromosome, we extracted read coverage for the window between $s-(e-s)$ and $e+(e-s)$ bp, $s=\min s_i$, $e=\max e_i$, and downsampled to about 1000 points as before. Based on visual inspection of the coverage, we chose a group of labels in chromosome 12, which provides a variety of annotations and a structure that appears to contain both nuisance shifts and signal peaks.

The ChIP-seq datasets were analysed by the method proposed here (Algorithm \ref{algfull}), an epidemic detector from package \textbf{anomaly}, and the non-epidemic detector \textbf{not}. The length of the signal segments was limited to 50 kbp in \textbf{anomaly} and the proposed method. As an estimate of global mean $\mu_0$, the median of the unsegmented data was used, and $\sigma_0$ estimated by the standard deviation of non-segmented data. As before, penalties were set to $3 \log(n)^{1.1}$. Only segments with estimated $\theta>\mu_0$ are shown, as we are a priori interested only in regions of increased binding.

We also used GFPOP, implemented in R package \textbf{PeakSegDisk} \citep{Hocking2018}: this detector has been developed specifically for ChIP-seq data processing, and models a change in Poisson rate parameter. This method does not include a single background level, but enforces alternating up-down constraints. It is intended as a supervised method, with the penalty value $\lambda$ chosen based on the best segmentation provided by the training data. Therefore, for the Broad Institute dataset, we repeated the segmentation with $\lambda \in \{10^1, 10^2, 10^3, 10^4, 10^5, 10^6\}$, and show the $\lambda$ that produces between 2 and 10 segments.

To evaluate the results quantitatively, we calculated the SIC based on each method's segmentation. We used Gaussian likelihood with parameters matching the detection status (i.e., estimated mean for the points in each segment $\hat{\mu}_0$ for points outside segments, and $\hat{\sigma}_0$ as the standard deviation for all points). The number of parameters was set to $3k+2$: a mean and two endpoints for each of the $k$ segments reported, and two for the background parameters.

In the H3K27ac data, all methods detected the three most prominent peaks, but produced different results for smaller peaks and more diffuse change areas (Figure \ref{fig:chipbroad}, bottom).
Both \textbf{PeakSegDisk} and \textbf{anomaly} marked a broad segment in the area around 120,600,000 bp. Based on comparison with the control data, this change is spurious, and it exceeds the 50 kbp bound for target segments. While this bound was provided to the \textbf{anomaly} detector, it does not include an alternative way to model these changes, and therefore still reports one or more shorter segments.
In contrast, our method accurately modelled the area as a nuisance segment with two overlapping sharp peaks.

Using \textbf{not}, the data was partitioned into 10 segments. By defining segments with low mean ($\theta < \hat{\mu}_0 + \hat{\sigma}_0$) as background, we could reduce this to 4 signal segments; while this removed the spurious background change, it also discarded the shorter change around 120,200,000 bp, which fits the definition of a signal peak ($<50$ kbp) and was retained by the proposed method.
This data illustrates that choosing the post-processing required for most approaches is not trivial, and can have a large impact on the results. In contrast, the parameters required for our method have a natural interpretation and may be known a priori or easily estimated, and the outputs are provided in a directly useful form.
This data illustrates that choosing the post-processing options is not trivial, and can have a large impact on the results. In contrast, the parameters required for our method have a natural interpretation and may be known a priori or easily estimated, and the outputs are provided in a directly useful form.

\begin{figure}[h]
    \centering
    \includegraphics[width=0.9\textwidth]{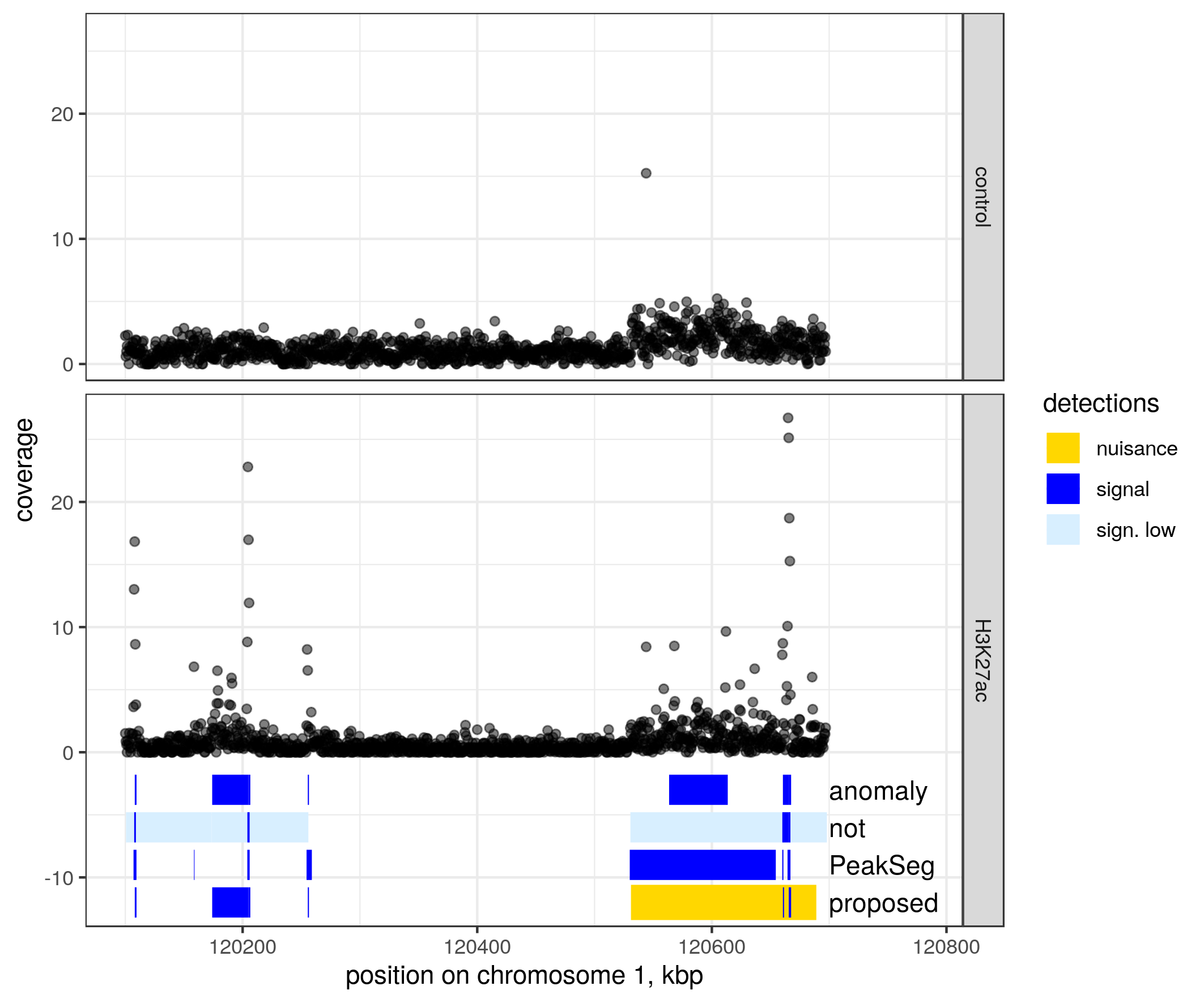}
    \caption{ChIP-seq read counts and analysis results. Counts provided as mean coverage in 500 bp windows for a non-specific control sample (top) and H3K27ac histone modification (bottom), chromosome 1. Segments detected in the H3K27ac data by the method proposed here (Algorithm \ref{algfull}) and three other detectors are shown under the counts. Note that the proposed method can also produce longer nuisance changes (yellow) overlapped by signal segments (blue). \textbf{not} does not specifically identify background segments; we show the ones with relatively low mean in light blue.}
    \label{fig:chipbroad}
\end{figure}

In the UCI data, segment detections also generally matched the visually determined labels. However, our method produced the most parsimonious models to explain the changes. Figure \ref{fig:ucichr12} shows such an example from chromosome 12, where our method reported two nuisance segments and a single sharp peak around 62,750,000 bp. The nuisance segments correspond to broad regions of mean shift, which were also detected by \textbf{anomaly} and \textbf{not}, but using 6 and 16 segments, respectively. Notably, \textbf{PeakSeg} differed considerably: as this method does not incorporate a single background level, but requires segments to alternate between background and signal, the area around 62,750,000 bp was defined as background, despite having a mean of $4.5~\hat{\mu}_0$. In total, 12 segments were reported by this method. This shows that the ability to separate nuisance and signal segments helps produce more parsimonious models, and in this way minimises the downstream efforts such as experimental replication of the peaks.

The visual annotations provided for this region are shown in the first row in Figure \ref{fig:ucichr12}. Note that they do not distinguish between narrow and broad peaks (single annotations in this sample range up to 690 kb in size). Furthermore, comparison with such labels does not account for finer segmentation, coverage in the peak area, or the number of false alarms outside it. For these reasons we are unable to use the labels in a quantitative way.

Quantitative comparison of the segmentations by SIC also favours our proposed method in both datasets. In the Broad dataset, SICs corresponding to segmentations reported by \textbf{PeakSeg}, \textbf{not} and \textbf{anomaly} were 4447.4, 4532.2, and 4311.6, respectively, while the segmentation produced by our model had an SIC of 4285.8. 
The smallest criterion values in UCI data were also produced by our method (4664.5), closely followed by \textbf{anomaly} (4664.8), while \textbf{not} segmentation resulted in an SIC of 4705.7 and \textbf{PeakSeg} 6886.7.
This suggests that in addition to the practical benefits of separating unwanted segments, the nuisance-signal structure provides a better fit to these datasets than models that allow only one type of segments.

\begin{figure}[h]
    \centering
    \includegraphics[width=0.9\textwidth]{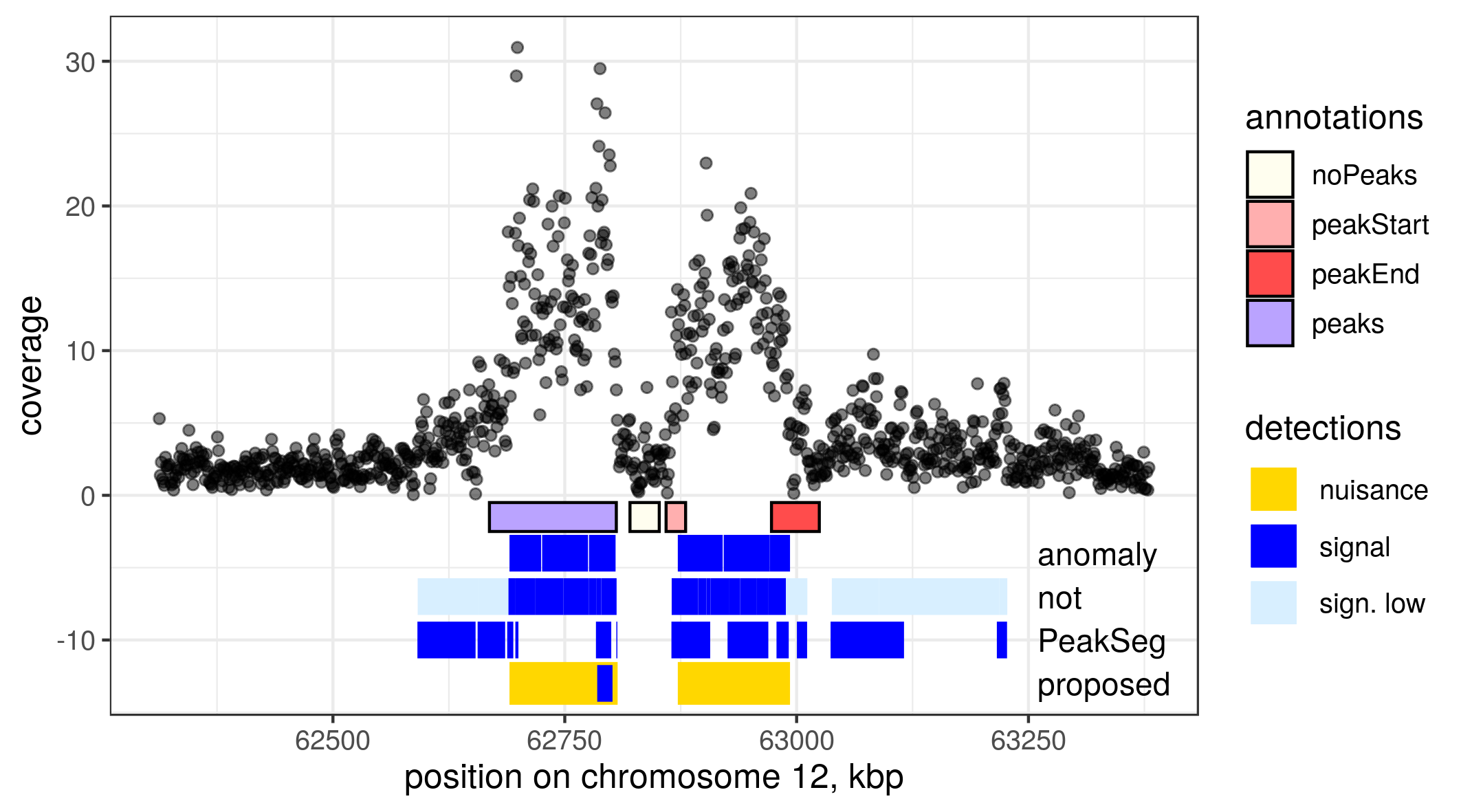}
    \caption{A window of H3K36me3 ChIP-seq data on chromosome 10. Read coverage in 1100 bp windows (black points), manual annotations of peaks included in the dataset (boxes below), and detection results using Algorithm 2 proposed in this paper, as well as three state-of-the-art methods (lines at the bottom).}
    \label{fig:ucichr12}
\end{figure}

\subsubsection{European mortality data}
The recent pandemic of coronavirus disease COVID-19 prompted a renewed interest in early outbreak detection and quantification. In particular, analysis of mortality data provided an important resource for guiding the public health responses to it \citep[e.g.][]{Covid1, Covid2, Covid3}.

We analysed Eurostat data of weekly deaths in Spain over a three year period between 2017 and 2020. Data was retrieved from \url{https://ec.europa.eu} Data Explorer. Besides the impact of the pandemic, mortality data contains normal seasonal variations, particularly in older age groups. We use the 60-64 years age group in which these trends are visually clear and thus provide a ground truth.

Data were analysed with the four methods introduced earlier.
For the \textbf{proposed} and \textbf{anomaly} methods, we used the median and standard deviation of the first 52 weeks of the dataset as estimates of $\mu_0$ and $\sigma_0$, respectively. Penalties in both were set to $3 \log(n)^{1.1}$, as previously. The maximum length of signal segments was set to 10 weeks, to separate seasonal effects. In addition, \textbf{not} was used with default parameters, defining signal as regions where the mean exceeds $\mu_0 \pm \sigma_0$, and \textbf{PeakSegDisk} with a basic grid search to select the penalty as before.

The results of the four analyses are shown in Figure \ref{fig:covid}. Three of the methods, \textbf{anomaly}, \textbf{PeakSeg}, and Algorithm \ref{algfull}, detected a sharp peak around the pandemic period. However, \textbf{anomaly} and \textbf{PeakSeg} also marked one winter period as a signal segment, while ignoring the other two. Four segments were created by \textbf{not}, including a broad peak continuing well past the end of the pandemic spike. In contrast, the proposed method marked the pandemic spike sharply, while also labelling all three winter periods as nuisance segments. The resulting detection using our method is again parsimonious and flexible: if only short peaks are of interest, our method reports those with lower false alarm rate than the other methods, but broader segments are also marked accurately and can be retrieved if relevant.

As in the ChIP-seq data, comparing the results by SIC identifies our method as optimal for this dataset. The values corresponding to \textbf{PeakSeg}, \textbf{not} and \textbf{anomaly} models were 1629.3, 1648.2, and 1626.3 respectively, while Algorithm \ref{algfull} produced a segmentation with SIC 1568.2. Note that the SIC penalizes both signal and nuisance segments, so in this case our model still appears optimal despite having more parameters.

\begin{figure}[h]
    \centering
    \includegraphics[width=0.8\textwidth]{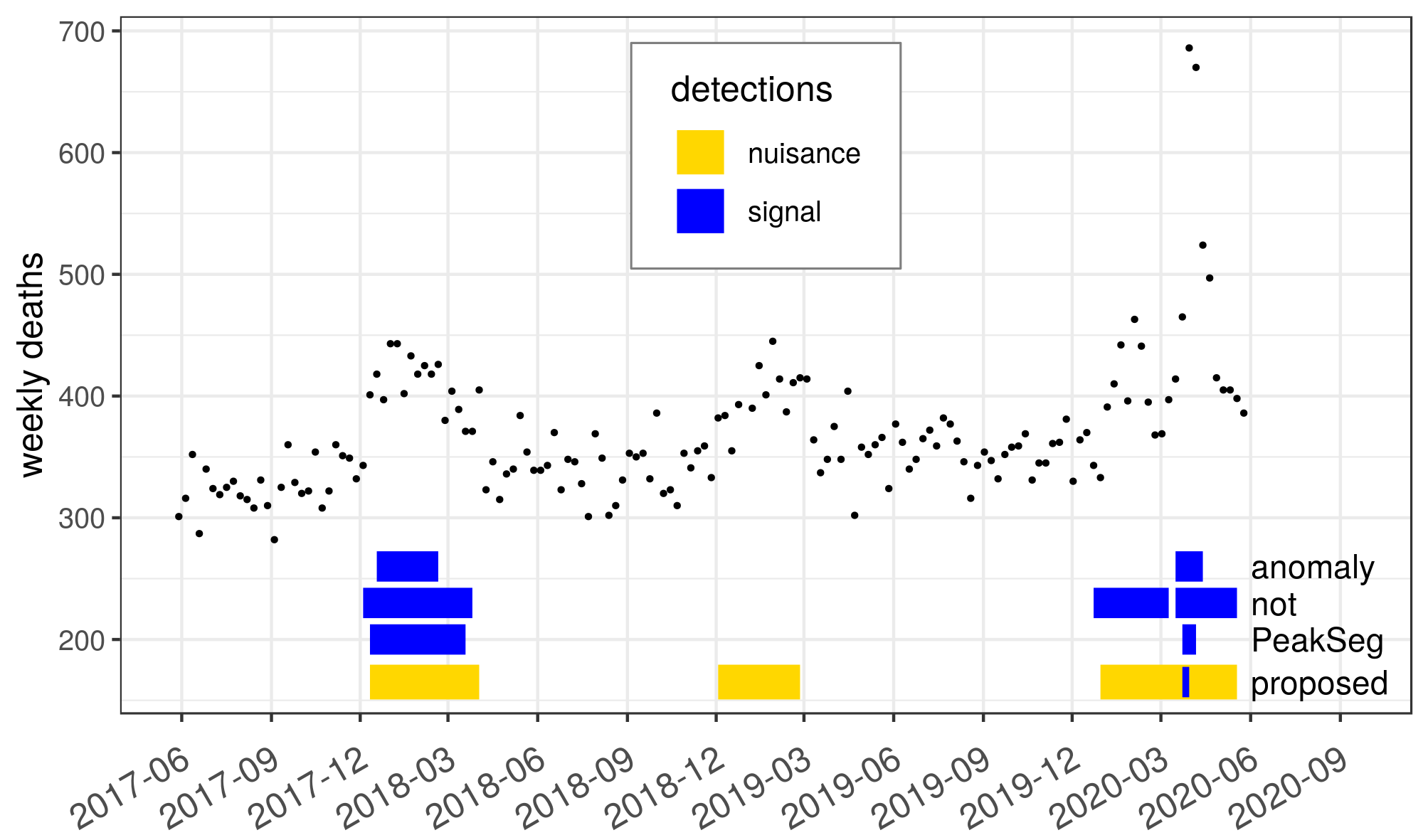}
    \caption{Weekly deaths in Spain, in the 60-64 years age group, over 2017--2020 (black points). Detection results using the method proposed in this paper and three alternative methods shown as lines below.}
    \label{fig:covid}
\end{figure}

\clearpage

\section{Discussion}

In this paper, we have presented a pair of algorithms for improving detection of epidemic changepoints. Similarly to stochastic gradient descent, the iterative updating of the background estimate in Algorithm \ref{algshort} leads to fast convergence while allowing large fraction of non-background points. This is utilised in Algorithm \ref{algfull} to analyse nuisance-signal overlaps.

The computational complexity of both algorithms presented here is $\mathcal{O}(n)$ in the best case, which is similar to state-of-the-art pruned algorithms \citep{KillickPELT, Hocking2018}. However, note that this is stated in the number of required evaluations of $C$. It is usually implicitly assumed that this function can be evaluated and minimised over $\theta$ recursively, so that the total number of operations may also be linear. This would not be achievable with methods that estimate the background level strictly offline, such as aPELT-profile \citep{aPELT2019}. Therefore, development of Algorithm \ref{algshort} was essential to create the overlap detector.

One of major practical benefits of the proposed model is the ability to separate non-target segments. We anticipate that this will greatly improve downstream processing, effectively reducing the false alarm rate or the manual load if the detections are reviewed. Despite that, it is difficult to evaluate this benefit at present: while there are recent datasets with annotations specifically for testing changepoint detection \citep{HockingBio2016, Burg2020}, they are based on labelling all visually apparent changes. In future work, we expect to provide further application-specific comparisons that would measure the impact of separating and neutralising the nuisance process.

\section{Acknowledgements}

This work was supported by the Royal Society of New Zealand Marsden Fund and Te P\=unaha Matatini, a New Zealand Centre of Research Excellence.

\bibliography{biblchangepoint}

\begin{thebibliography}{34}
\expandafter\ifx\csname natexlab\endcsname\relax\def\natexlab#1{#1}\fi
\expandafter\ifx\csname url\endcsname\relax
  \def\url#1{\texttt{#1}}\fi
\expandafter\ifx\csname urlprefix\endcsname\relax\def\urlprefix{URL: }\fi

\bibitem[{Aminikhanghahi and Cook(2016)}]{Aminikhanghahi2016}
Aminikhanghahi, S. and Cook, D.~J. (2016) A survey of methods for time series
  change point detection.
\newblock \textit{Knowledge and Information Systems}, \textbf{51}, 339--367.

\bibitem[{Baranowski et~al.(2019)Baranowski, Chen and
  Fryzlewicz}]{Baranowski2019}
Baranowski, R., Chen, Y. and Fryzlewicz, P. (2019) Narrowest-over-threshold
  detection of multiple change points and change-point-like features.
\newblock \textit{Journal of the Royal Statistical Society: Series B
  (Statistical Methodology)}, \textbf{81}, 649--672.

\bibitem[{Baud et~al.(2020)Baud, Qi, Nielsen-Saines, Musso, Pomar and
  Favre}]{Covid1}
Baud, D., Qi, X., Nielsen-Saines, K., Musso, D., Pomar, L. and Favre, G. (2020)
  Real estimates of mortality following {COVID}-19 infection.
\newblock \textit{The Lancet Infectious Diseases}, \textbf{20}, 773.

\bibitem[{Bottou(1998)}]{bottou98}
Bottou, L. (1998) Online algorithms and stochastic approximations.
\newblock In \textit{Online Learning and Neural Networks} (ed. D.~Saad).
  Cambridge, UK: Cambridge University Press.
\newblock Revised, oct 2012.

\bibitem[{Bracker and Smith(1999)}]{Bracker1999}
Bracker, K. and Smith, K.~L. (1999) Detecting and modeling changing volatility
  in the copper futures market.
\newblock \textit{Journal of Futures Markets}, \textbf{19}, 79--100.

\bibitem[{van~den Burg and Williams(2020)}]{Burg2020}
van~den Burg, G. J.~J. and Williams, C. K.~I. (2020) An evaluation of change
  point detection algorithms.
\newblock \textit{Preprint arXiv:2003.06222}.

\bibitem[{Dowd et~al.(2020)Dowd, Andriano, Brazel, Rotondi, Block, Ding, Liu
  and Mills}]{Covid3}
Dowd, J.~B., Andriano, L., Brazel, D.~M., Rotondi, V., Block, P., Ding, X.,
  Liu, Y. and Mills, M.~C. (2020) Demographic science aids in understanding the
  spread and fatality rates of {COVID}-19.
\newblock \textit{Proceedings of the National Academy of Sciences},
  \textbf{117}, 9696--9698.

\bibitem[{Fisch et~al.(2018)Fisch, Eckley and Fearnhead}]{CAPA2018}
Fisch, A. T.~M., Eckley, I.~A. and Fearnhead, P. (2018) A linear time method
  for the detection of point and collective anomalies.
\newblock \textit{Preprint arXiv:1806.01947}.

\bibitem[{Fryzlewicz(2014)}]{Fryzlewicz2014}
Fryzlewicz, P. (2014) Wild binary segmentation for multiple change-point
  detection.
\newblock \textit{The Annals of Statistics}, \textbf{42}, 2243--2281.

\bibitem[{Harvey et~al.(2019)Harvey, Liaw, Plan and Randhawa}]{SgdTailProb2018}
Harvey, N. J.~A., Liaw, C., Plan, Y. and Randhawa, S. (2019) Tight analyses for
  non-smooth stochastic gradient descent.
\newblock In \textit{Proceedings of the Thirty-Second Conference on Learning
  Theory} (eds. A.~Beygelzimer and D.~Hsu), vol.~99 of \textit{Proceedings of
  Machine Learning Research}, 1579--1613. Phoenix, USA: PMLR.

\bibitem[{Hochenbaum et~al.(2017)Hochenbaum, Vallis and
  Kejariwal}]{Hochenbaum2017}
Hochenbaum, J., Vallis, O.~S. and Kejariwal, A. (2017) Automatic anomaly
  detection in the cloud via statistical learning.
\newblock \textit{Preprint arXiv:1704.07706}.

\bibitem[{Hocking et~al.(2017)Hocking, Goerner-Potvin, Morin, Shao, Pastinen
  and Bourque}]{HockingBio2016}
Hocking, T.~D., Goerner-Potvin, P., Morin, A., Shao, X., Pastinen, T. and
  Bourque, G. (2017) Optimizing {ChIP}-seq peak detectors using visual labels
  and supervised machine learning.
\newblock \textit{Bioinformatics}, \textbf{33}, 491--499.

\bibitem[{Hocking et~al.(2018)Hocking, Rigaill, Fearnhead and
  Bourque}]{Hocking2018}
Hocking, T.~D., Rigaill, G., Fearnhead, P. and Bourque, G. (2018) Generalized
  functional pruning optimal partitioning (gfpop) for constrained changepoint
  detection in genomic data.
\newblock \textit{Preprint arXiv:1810.00117}.

\bibitem[{Jackson et~al.(2005)Jackson, Scargle, Barnes, Arabhi, Alt,
  Gioumousis, Gwin, San, Tan and Tsai}]{JacksonOP}
Jackson, B., Scargle, J., Barnes, D., Arabhi, S., Alt, A., Gioumousis, P.,
  Gwin, E., San, P., Tan, L. and Tsai, T.~T. (2005) An algorithm for optimal
  partitioning of data on an interval.
\newblock \textit{{IEEE} Signal Processing Letters}, \textbf{12}, 105--108.

\bibitem[{Killick et~al.(2012)Killick, Fearnhead and Eckley}]{KillickPELT}
Killick, R., Fearnhead, P. and Eckley, I.~A. (2012) Optimal detection of
  changepoints with a linear computational cost.
\newblock \textit{Journal of the American Statistical Association},
  \textbf{107}, 1590--1598.

\bibitem[{Lau and Tay(2019)}]{Lau2019}
Lau, T.~S. and Tay, W.~P. (2019) Quickest change detection in the presence of a
  nuisance change.
\newblock \textit{{IEEE} Transactions on Signal Processing}, \textbf{67},
  5281--5296.

\bibitem[{Levin and Kline(1985)}]{Levin1985}
Levin, B. and Kline, J. (1985) The cusum test of homogeneity with an
  application in spontaneous abortion epidemiology.
\newblock \textit{Statistics in Medicine}, \textbf{4}, 469--488.

\bibitem[{Li et~al.(2016)Li, Cao, Leamon, Xie, Shi and Song}]{Li2016}
Li, S., Cao, Y., Leamon, C., Xie, Y., Shi, L. and Song, W. (2016) Online
  seismic event picking via sequential change-point detection.
\newblock In \textit{2016 54th Annual Allerton Conference on Communication,
  Control, and Computing (Allerton)}. {IEEE}.

\bibitem[{Maidstone et~al.(2016)Maidstone, Hocking, Rigaill and
  Fearnhead}]{Maidstone2014}
Maidstone, R., Hocking, T., Rigaill, G. and Fearnhead, P. (2016) On optimal
  multiple changepoint algorithms for large data.
\newblock \textit{Statistics and Computing}, \textbf{27}, 519--533.

\bibitem[{Olshen et~al.(2004)Olshen, Venkatraman, Lucito and
  Wigler}]{Olshen2004}
Olshen, A.~B., Venkatraman, E.~S., Lucito, R. and Wigler, M. (2004) Circular
  binary segmentation for the analysis of array-based {DNA} copy number data.
\newblock \textit{Biostatistics}, \textbf{5}, 557--572.

\bibitem[{Page(1954)}]{Page1954}
Page, E.~S. (1954) Continuous inspection schemes.
\newblock \textit{Biometrika}, \textbf{41}, 100.

\bibitem[{Pearce et~al.(2000)Pearce, Hirsch and {Ericsson Eurolab Deutschland
  Gmbh}}]{Aurora2000}
Pearce, D., Hirsch, H.-g. and {Ericsson Eurolab Deutschland Gmbh} (2000) The
  aurora experimental framework for the performance evaluation of speech
  recognition systems under noisy conditions.
\newblock In \textit{ISCA ITRW ASR2000}, 29--32.

\bibitem[{Purkayastha(1998)}]{convolutions1998}
Purkayastha, S. (1998) Simple proofs of two results on convolutions of unimodal
  distributions.
\newblock \textit{Statistics {\&} Probability Letters}, \textbf{39}, 97--100.

\bibitem[{Rashid et~al.(2011)Rashid, Giresi, Ibrahim, Sun and Lieb}]{ZINBA2011}
Rashid, N.~U., Giresi, P.~G., Ibrahim, J.~G., Sun, W. and Lieb, J.~D. (2011)
  {ZINBA} integrates local covariates with {DNA}-seq data to identify broad and
  narrow regions of enrichment, even within amplified genomic regions.
\newblock \textit{Genome Biology}, \textbf{12}, R67.

\bibitem[{Rigaill(2015)}]{Rigaill2010}
Rigaill, G. (2015) A pruned dynamic programming algorithm to recover the best
  segmentations with 1 to ${K}_{max}$ change-points.
\newblock \textit{Journal de la Soci{\'e}t{\'e} Fran{\c{c}}aise de
  Statistique}, \textbf{156}, 180--205.

\bibitem[{Shewhart(1930)}]{Shewhart1930}
Shewhart, W.~A. (1930) Economic quality control of manufactured product.
\newblock \textit{Bell System Technical Journal}, \textbf{9}, 364--389.

\bibitem[{Texier et~al.(2016)Texier, Farouh, Pellegrin, Jackson, Meynard,
  Deparis and Chaudet}]{Texier2016}
Texier, G., Farouh, M., Pellegrin, L., Jackson, M.~L., Meynard, J.-B., Deparis,
  X. and Chaudet, H. (2016) Outbreak definition by change point analysis: a
  tool for public health decision?
\newblock \textit{{BMC} Medical Informatics and Decision Making}, \textbf{16}.

\bibitem[{Truong et~al.(2020)Truong, Oudre and Vayatis}]{Truong2018}
Truong, C., Oudre, L. and Vayatis, N. (2020) Selective review of offline change
  point detection methods.
\newblock \textit{Signal Processing}, \textbf{167}, 107299.

\bibitem[{Vaisman et~al.(2010)Vaisman, Zariffa and Popovic}]{Vaisman2010}
Vaisman, L., Zariffa, J. and Popovic, M.~R. (2010) Application of singular
  spectrum-based change-point analysis to {EMG}-onset detection.
\newblock \textit{Journal of Electromyography and Kinesiology}, \textbf{20},
  750--760.

\bibitem[{Yuan et~al.(2020)Yuan, Li, Lv and Lu}]{Covid2}
Yuan, J., Li, M., Lv, G. and Lu, Z.~K. (2020) Monitoring transmissibility and
  mortality of {COVID}-19 in europe.
\newblock \textit{International Journal of Infectious Diseases}, \textbf{95},
  311--315.

\bibitem[{Zhang et~al.(2008)Zhang, Liu, Meyer, Eeckhoute, Johnson, Bernstein,
  Nussbaum, Myers, Brown, Li and Liu}]{MACS2008}
Zhang, Y., Liu, T., Meyer, C.~A., Eeckhoute, J., Johnson, D.~S., Bernstein,
  B.~E., Nussbaum, C., Myers, R.~M., Brown, M., Li, W. and Liu, X.~S. (2008)
  Model-based analysis of {ChIP}-seq ({MACS}).
\newblock \textit{Genome Biology}, \textbf{9}, R137.

\bibitem[{Zhao and Yau(2019)}]{aPELT2019}
Zhao, Z. and Yau, C.~Y. (2019) Alternating pruned dynamic programming for
  multiple epidemic change-point estimation.
\newblock \textit{Preprint arXiv:1907.06810}.

\bibitem[{Zheng et~al.(2019)Zheng, Eckley and Fearnhead}]{Zheng2019}
Zheng, C., Eckley, I.~A. and Fearnhead, P. (2019) Consistency of a range of
  penalised cost approaches for detecting multiple changepoints.
\newblock \textit{Preprint arXiv:1911.01716}.

\bibitem[{Zidek and van Eeden(2003)}]{zidek2003}
Zidek, J.~V. and van Eeden, C. (2003) \textit{Uncertainty, entropy, variance
  and the effect of partial information}, vol.~42 of \textit{Lecture
  Notes--Monograph Series}, 155--167.
\newblock Institute of Mathematical Statistics.

\end{thebibliography}

\appendix

\section{Proof of Theorem \ref{thm:algconverges} (Convergence of Algorithm \ref{algshort})} \label{app:convergence}

\citet{bottou98} analyses the case of an online algorithm iteratively minimising some function $f(x,w)$ (where $x$ represents the complete data and $w$ the parameters). Data points $\{x_t\}$ arrive sequentially, and at each iteration an estimate $w_t$ of the location of the minimum $w^*$ is obtained using some update function $H(x,w)$ and learning rate $\gamma_t$ as:
\begin{equation}
    \label{eq:sgd}
    w_{t+1} = w_t - \gamma_t H(x_{t+1}, w_t).
\end{equation}
This updating mechanism gives rise to stochastic gradient descent if $\mathbb{E} H(x_{t+1}, w_t) = \nabla_w f(x, w)$, but for the following argument this is not required.

To make the link with Algorithm 1 explicit, the update equation applied by this algorithm can be written as:
\[ w_{t+1} = w_t + \gamma_t (x_{t+1} - w_t). \]

Then $w^* = \theta_0$ (i.e., the background mean that is to be estimated), and we ask whether the sequence of updates converges $w_t \rightarrow w^*$. It was shown by \citet{bottou98} that this occurs almost surely if the following three conditions are met:
\begin{enumerate}
    \item ``convexity'' -- a single optimum $w^*$ exists and the expected value of the updates always points towards it:
        \begin{equation}
            \label{eq:condconv}
            \forall \epsilon>0, \inf_{(w-w^*)^2>\epsilon} (w-w^*) \mathbb{E}H(x, w) > 0;
        \end{equation}
    \item learning rate convergence:
        \begin{equation}
            \label{eq:condlr}
            \sum_{i=1}^\infty \gamma_t = \infty \text{ , } \sum_{i=1}^\infty \gamma^2_t < \infty;
        \end{equation}
    \item bounded variance of the updates:
        \begin{equation}
            \mathbb{E}H(x, w)^2 \leq A + B(w-w^*)^2 \text{ , } A,B \geq 0.
            \label{eq:condeigen}
        \end{equation}
\end{enumerate}

Thus, proof of convergence of our algorithm reduces to showing that these requirements are satisfied. We start with the assumption that the global mean of segment points is also $\theta_0$, and then relax this requirement. 

The following lemma will be needed:
\begin{lemma}
    \label{le:meantr}
    Let $f$ be a unimodal distribution, symmetric around a point $\mu$ (so that $f(x_1) < f(x_2)$ when $x_1 < x_2 \le \mu$ and $f(x_1) > f(x_2)$ when $\mu \le x_1 < x_2$), such as a Gaussian.
    Consider a truncated random variable $X$ with pdf:

    \[ g(x) = \begin{cases} 0 & \text{if } x<m-a \\
            \frac{f(x)}{P(m-a \le x \le m+a)} & \text{if } m-a \le x \le m+a \\
            0 & \text{if } x>m+a
        \end{cases}
    \]
    for some $a>0, m$. Then $\inf_{m \neq \mu}(\mathbb{E}X - m)(\mu - m)>0$.  
\end{lemma}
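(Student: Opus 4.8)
The plan is to use the symmetry of $f$ to collapse the two-factor product into a single sign, and then to pin that sign down with strong unimodality. Write $\mathbb{E}X = T(m) := \mathbb{E}[X \mid X \in [m-a,m+a]]$, a smooth function of the truncation centre $m$. First I would translate coordinates so that $\mu = 0$; then $f$ is even and strictly decreasing in $|x|$, and the target becomes $(T(m)-m)(-m) > 0$, i.e.\ $T(m)-m$ and $m$ carry opposite signs for every $m \neq 0$. Substituting $u = x-m$ and folding the integral about $u=0$,
\[
T(m) - m = \frac{\int_{-a}^{a} u\,f(m+u)\,du}{\int_{-a}^{a} f(m+u)\,du} = \frac{\int_{0}^{a} u\,\bigl[f(m+u)-f(m-u)\bigr]\,du}{\int_{-a}^{a} f(m+u)\,du},
\]
so the denominator is positive and the entire sign question is carried by the bracket $f(m+u)-f(m-u)$ on $u \in (0,a)$.

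The sign of that bracket I would settle by an elementary distance comparison, valid regardless of whether the window crosses the mode. For $m>0$ and $u\in(0,a)$ one has $|m+u| > |m-u|$ (squaring reduces this to $4mu>0$), so strong unimodality gives $f(m+u) < f(m-u)$ strictly; hence the integrand is negative throughout $(0,a)$, the numerator is strictly negative, and $T(m)-m<0$ while $m>0$, making the product strictly positive. The case $m<0$ follows identically, or at once from the relation $T(-m)=-T(m)$ forced by evenness of $f$. This delivers the substantive content of the lemma: $(\mathbb{E}X-m)(\mu-m) > 0$ for every $m \neq \mu$, with a strictness that does not rely on any casework about the window overlapping $\mu$.

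To upgrade this pointwise positivity to a positive infimum I would bound the product below on the range over which the lemma is actually invoked in condition \eqref{eq:condconv}. For large $|m-\mu|$ the window slides into the strictly monotone part of $f$, the displacement $|T(m)-m|$ increases towards its cap $a$, and the product grows without bound, so it is bounded below there by its value at some radius $M$; on the compact shell $\delta \le |m-\mu| \le M$ the product is continuous and strictly positive and hence attains a positive minimum, and the smaller of the two bounds is the desired strictly positive lower bound. The main obstacle is the regime $m \to \mu$, where both factors vanish simultaneously and the behaviour of the product is governed by the local rate of $T(m)-m$ near the mode; this is exactly the neighbourhood that the $\epsilon$ in condition \eqref{eq:condconv} removes, so excluding it converts the attained-minimum argument into the uniform positive bound that the convergence proof requires. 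Handling this near-mode regime cleanly, rather than the routine sign analysis above, is where the real care lies.
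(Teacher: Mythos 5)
Your proof is correct, and its core coincides with the paper's: both fold the integral about the window centre to get $\mathbb{E}X - m \propto \int_0^a u\,\bigl(f(m+u)-f(m-u)\bigr)\,du$ and then read off the sign of the bracket from symmetry and monotonicity. Two differences are worth noting. First, you streamline the sign step: the paper splits into three cases (window entirely left of $\mu$, entirely right, straddling $\mu$), whereas your single distance comparison $|m+u|>|m-u|$ for $m>\mu=0$, $u\in(0,a)$ handles all cases at once; you also retain the normalising denominator $\int_{-a}^a f(m+u)\,du$, which the paper silently drops (harmless for the sign, but yours is the tidier bookkeeping). Second, and more substantively, you engage with the infimum claim, which the paper's proof does not: the paper establishes only pointwise strict positivity of $(\mathbb{E}X-m)(\mu-m)$ and then asserts the lemma, even though, as you observe, both factors vanish continuously as $m\to\mu$, so $\inf_{m\neq\mu}$ of the product is literally $0$ and the statement must be read as uniform positivity on $|m-\mu|\ge\epsilon$ --- precisely the form in which it is invoked in condition \eqref{eq:condconv}. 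Your compactness-plus-growth argument for that restricted infimum is the right repair, with one caveat: the claim that the product grows without bound as $|m-\mu|\to\infty$ (displacement approaching its cap $a$) requires light tails and can fail for merely strictly unimodal $f$ with very slowly decaying or nearly flat stretches; it does hold under the strong unimodality (log-concavity) assumed in Theorem \ref{thm:algconverges}, and trivially in the Gaussian case actually used, so you should state that step under the theorem's hypotheses rather than the lemma's weaker ones. In short: same mechanism as the paper for the sign, a cleaner execution, and a more honest treatment of the infimum that the paper glosses over.
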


\begin{proof}
\begin{align}
    \label{eq:meantr}
    \mathbb{E}X - m &= \int_{m-a}^{m+a} (x-m) f(x) dx \nonumber \\ 
    &= \int_{-a}^0 y f(y+m) dy + \int_0^a y f(y+m) dy \nonumber \\ 
    &= \int_0^a y (f(m+y) - f(m-y)) dy.
\end{align}
When $m+a < \mu$, $f$ is increasing throughout the integration range, and $\mathbb{E}X-m>0$; the opposite is true for $m-a > \mu$.
If $m-a < m < \mu < m+a$, split the integral in \eqref{eq:meantr} as:
$$ \mathbb{E}X - m = \int_0^{\mu-m} y (f(m+y) - f(m-y)) dy + \int_{\mu-m}^{a} y(f(m+y) - f(m-y)) dy. $$ 
The first integral covers the range where $f$ is increasing, and thus is positive. Since $\mu-m > 0$, $|m+y-\mu| < |m-y-\mu|$ for $y>0$, and $f(m+y) > f(m-y)$ by symmetry of $f$ around $\mu$ and monotonicity, so the second interval is positive as well. Similarly, $\mathbb{E}X-m < 0$ for $m-a < \mu < m < m+a$. 
\end{proof}

\subsection{When the global mean of segments matches the background mean}

Consider the case that the background points are independent draws from $\mathcal{N}(w^*, \sigma^2)$, so that the points within each segment are $\mathcal{N}(\theta_i, \sigma^2)$, with $\sigma^2$ known, and $\theta \sim \mathcal{N}(w^*, \tau^2)$. Let $w_t$ be the value of the background mean estimated by Algorithm 1 after processing $t$ data points. In this case $w_t \xrightarrow{a.s.} w^*$.

\begin{proof}
Denote the true class of the next data point $x_{t+1}$ by $\delta_{t+1}$ (1 for background points, 0 for signal). Algorithm 1 estimates this as:
\begin{equation*}
    \hat{\delta}_{t+1} = \begin{cases} 1 & \text{if } F(t) + C^0(x_{t+1}; w_t) < \min_{1 \le k \le l} F(t+1-k) + C(x_{t+2-k:t+1}) + \beta \\
        0 & \text{otherwise}.
    \end{cases}
\end{equation*}

Initially, assume for simplicity that the true maximum segment length is 1 (and so only $k=1$ is tested). When $\hat{\delta}_{t+1} = 1$, the background estimate is updated as:
\[ w_{t+1} = w_t + \frac{1}{\sum_{i=1}^t \hat{\delta}_i+1}(x_{t+1} - w_t) \]
(otherwise $w_{t+1} = w_t$). So $\gamma_t = 1/(\sum_{i=1}^t \hat{\delta}_i+1)$, and hence the learning rate convergence conditions \eqref{eq:condlr} are satisfied.

Substituting in the costs based on a one-dimensional Gaussian pdf $\phi$, $\hat{\delta}_{t+1} =1 $ if:
\begin{align}
    &-\log \phi(x_{t+1}; w_t, \sigma^2) < - \log \phi(x_{t+1}; x_{t+1}, \sigma^2) + \beta  \nonumber \\
    \Rightarrow& \frac{1}{\sigma \sqrt{2\pi}} \exp{\frac{ -(x_{t+1}-w_t)^2 }{ 2\sigma^2 }} > \frac{1}{\sigma \sqrt{2\pi}} e^{-\beta} \nonumber \\
    \Rightarrow& | x_{t+1}-w_t | < \sqrt{2 \beta \sigma^2} \nonumber \\
    \Rightarrow& x_{t+1} \in (w_t-\sqrt{2 \beta \sigma^2}; w_t + \sqrt{2 \beta \sigma^2}).
    \label{eq:tru}
\end{align}

The distribution of true segment data points $f(x_{t+1} | \delta_{t+1}=0)$ is symmetric with mean $w^*$ by assumption (as is easily verified for the Gaussian case). The background point distribution is also automatically symmetric.
Thus, the overall distribution of the points used to update the $w_t$ estimate is a truncation of a symmetric unimodal distribution. In the present case, it is a truncated normal with limits $(w_t - \sigma \sqrt{2\beta}, w_t + \sigma \sqrt{2\beta})$, based on \eqref{eq:tru}; more generally it is a truncated variant of the parent distribution with symmetric limits of the form $w_t \pm a$, and parent mean $w^*$ (that the acceptance set is an interval follows from the unimodality of $f$).

This means that $f(x_{t+1} | \hat{\delta}_{t+1}=1)$ satisfies the requirements for Lemma \ref{le:meantr} with $\mu=w^*$, which implies the ``convexity'' condition \eqref{eq:condconv}:
\[ \inf_{(w-w^*)^2>\epsilon} (w-w^*)\mathbb{E}H(x_{t+1}, w) = \inf_{(w-w^*)^2>\epsilon} (w-w^*)(w-\mathbb{E}(x_{t+1} | \hat{\delta}_{t+1}=1)) > 0 \]

Following a similar approach -- conditioning on $\delta_{t+1}$ -- and using the law of total variance it can be shown that the variance of $\mathbb{E}(w-x_{t+1})$ is finite, as required for the condition \eqref{eq:condeigen}, and so $w_t \xrightarrow{a.s.} w^*$.


\textbf{Remark.}
So far, we assumed that segment length $k=1$. If segments occur and are tested in non-overlapping windows of any fixed size $k \ge 2$, the result is similar: $\forall j\in [t-k+1; t], \hat{\delta}_{j+1} =1 $ if:
\[ - \sum_{i=t-k+1}^{i=t} \log \phi(x_{i+1}; w_{t-k}, \sigma^2) < - \sum_{i=t-k+1}^{i=t} \log \phi(x_{i+1}; \bar{x}_k, \sigma^2) + \beta, \]
where $\bar{x}_k = \sum_{i=t-k+1}^{t} x_{i+1} / k$. This can be expressed as truncation limits for accepted $\bar{x}_k$, analogously to \eqref{eq:tru}:
\begin{align}
    \beta &> - \sum_{i=t-k+1}^t \log \phi(x_{i+1}; w_{t-k}, \sigma^2) + \sum_{i=t-k+1}^{i=t} \log \phi(x_{i+1}; \bar{x}_k, \sigma^2) \nonumber \\
    &= \sum_{i=t-k+1}^t \frac{ (x_{i+1}-w_{t-k})^2 }{ 2\sigma^2 } - \frac{ (x_{i+1}-\bar{x}_k)^2 }{ 2\sigma^2 } \nonumber \\
    &= \frac{1}{2\sigma^2} \sum_{i=t-k+1}^t (w_{t-k} - \bar{x}_k)^2 \nonumber \\
    &\Rightarrow \sqrt{2\beta\sigma^2/k} > |w_{t-k} - \bar{x}_k|.
    \label{eq:tru2}
\end{align}

The pdf of $\bar{x}_k$ is a $k$-fold convolution of $f(x)$. Since $f$, as shown earlier, is symmetric unimodal, so is their convolution, and hence the distribution of $\bar{x}_k$, with a mean $\mu=w^*$ \citep{convolutions1998}. In the special case when $f$ is the normal pdf, this can also be shown directly from Gaussian properties.
Then Lemma \ref{le:meantr} implies condition \eqref{eq:condconv}, and the rest of the proof follows as before. 


When all segment positions (overlapping or not) are tested, the background point acceptance rule is:
\[ \hat{\delta}_t = 1 \text{ if } x_t \in \bigcap_{1 \le k \le l} S_{k}, \text{ with } S_{k} = \{x_t: F(t-1) + C^0(x_t) < F(t-k) + C(x_{t-k+1:t}) + \beta \}. \]
As demonstrated earlier, $S_{1} = (w_t - \sqrt{2\beta \sigma^2}, w_t+\sqrt{2\beta \sigma^2}) $.  
Define three sets of $x_t$ based on which rules they pass: $X_1 = x_t \in S_1$, $X_a = x_t \in S_1 \cap S_{\ge 2}$, $X_r = x_t \in S_1 \setminus S_{\ge 2}$, and let $P_a$ and $P_r$ be the probabilities of the corresponding $x_t$ sets. Clearly, $X_1 = X_a \cup X_r$. We are interested in the mean of the points accepted as background, i.e. $\mathbb{E}X_a$.
Assume w.l.o.g. $\mu=0, \sigma=1, w_t>\mu$, as the other case is symmetric. We will now show that for sufficiently large $n$, $\mathbb{E}X_a < w_t$, satisfying \eqref{eq:condconv}.

Using the conditional mean formula:
\begin{align}
        \mathbb{E}X_1 &= P_r \mathbb{E}X_r + P_a \mathbb{E}X_a \nonumber \\
        \mathbb{E}X_a &= \mathbb{E}X_1 / (1-P_r) - P_r \mathbb{E}X_r / (1-P_r).
        \label{eq:Exa}
\end{align}

Assume for now $\mathbb{E}X_r = \mu = 0$.
Then, to obtain $\mathbb{E}X_a < w_t$, we need:
\begin{equation}
    \mathbb{E}X_1 / w_t < 1-P_r.
    \label{eq:tmeanmu0}
\end{equation}

Denote $p = \sqrt{2\beta}$, which is an increasing function of $n$, and consider the growth of both sides of \eqref{eq:tmeanmu0} as $n$ increases.
For the Gaussian or other distributions in the exponential family with mean $\mu$, truncated to a symmetric region $(-a, a)$, it is known that $Var(X | S;\mu) = \frac{d}{d\mu} \mathbb{E}(X|S; \mu)$ \citep{zidek2003}. Then (denoting $S'_1 = (-p; +p)$):
\begin{align*}
    \mathbb{E}X_1 &= w_t + \mathbb{E}(X | S'_1; -w_t) \\
    &= w_t + \mathbb{E}(X | S'_1; 0) + \int_{0}^{w_t} \frac{d \mathbb{E}(X | S'_1; -a)}{ d a } d a \\
    &= w_t + 0 - \int_{0}^{w_t} \frac{d\mathbb{E}(X | S'_1; a) }{ da } d a \\
    &< w_t - w_t \min_{0 \le a \le w_t} Var(X | S'_1;  a)  .
\end{align*}
Hence:
\begin{equation*}
    \mathbb{E}X_1/w_t < 1 - Var(X | S'_1 ; w_t) = 1 - \int_{w_t-p}^{w_t+p} x^2 f(x) dx,
\end{equation*}
where $f(x)$ is the pdf of $x_t$ given $x \in S_1$.
Hence, this side grows with $p$ as $-x^2 f(x)$.

To analyse $P_r$, we first simplify the background acceptance condition.
For any $k \ge 2$, by definition of $F$ and additivity of $C$, we have:
\begin{align*}
    F(t-k) + C(x_{t-k+1:t}) + \beta &= F(t-k) + C(x_{t-k+1:t-1}; \bar{x}_k) + C(x_t; \bar{x}_k) + \beta \\
    & = F(t-k) + C(x_{t-k+1:t-1}; \bar{x}_{-t}) + (k-1) d(\bar{x}_k, \bar{x}_{-t}) + d(\bar{x}_k, x_t) + \beta \\
    & = F(t-1) + A(t-1) + (k-1) d(\bar{x}_k, \bar{x}_{-t}) + d(\bar{x}_k, x_t),
\end{align*}
where $d$ is some distance function, $\bar{x}_{-t}$ is the mean of points $x_{t-k+1:t-1}$, and $A(t) \ge 0$ is a constant depending only on $x_{1:t}$.

It is also helpful to note that $F(t-1) \le F(t-k) + C^0(x_{t-k+1:t-1})$, hence:
\begin{align}
    \label{eq:C00}
    A(t-1) &= F(t-k) + C(x_{t-k+1:t-1}; \bar{x}_{-t}) + \beta - F(t-1) \nonumber \\
    &\ge C(x_{t-k+1:t-1}; \bar{x}_{-t}) + \beta - C^0(x_{t-k+1:t-1}) \nonumber \\
    &\ge \beta - (k-1) d(\bar{x}_{-t}, w_{t-1}).
\end{align}
This corresponds to the case when all $x_{t-k+1:t-1}$ were identified as background.

Using the Gaussian cost, i.e. $d(a,b) = (a-b)^2/2$, and recursive formula for the mean, the acceptance condition for $k$ becomes:
\begin{align}
    \label{eq:acconx}
    F(t-1) + C^0(x_t) &< F(t-1) + A(t-1) + \frac{k-1}{2 k^2} (x_t - \bar{x}_{-t})^2 + \frac{(k-1)^2}{2 k^2} (x_t - \bar{x}_{-t})^2 \nonumber \\
    \Rightarrow (x_t - w_{t-1})^2 &< 2 A(t-1) + \frac{k-1}{k} (x_t - \bar{x}_{-t})^2.
\end{align}

By substituting in the value of $A(t-1)$ from \eqref{eq:C00}, we obtain the following lower bound for $P(x \in S_k | x)$:
\begin{align*}
    P(x \in S_k | x) &\ge P\left( (x_t - w_{t-1})^2 < 2 \beta - (k-1)(\bar{x}_{-t} - w_{t-1})^2 + \frac{k-1}{k} (x_t - \bar{x}_{-t})^2 \right) \\
    &= P\left( (x_t - w_{t-1})^2 - \frac{k-1}{k} (x_t - \bar{x}_{-t})^2  + (k-1)(\bar{x}_{-t} - w_{t-1})^2 < 2 \beta \right)  \\
    &\ge P\left( (x_t - w_{t-1})^2 + (k-1)(\bar{x}_{-t} - w_{t-1})^2 < 2 \beta \right)  \\
    &\ge 1 - \mathbb{E}\left((x_t - w_{t-1})^2) + (k-1)(\bar{x}_{-t} - w_{t-1})^2 \right) / \left( 2 \beta \right) \\
    P(x \notin S_k | x) &\le \mathbb{E}\left((x_t - w_{t-1})^2) + (k-1)(\bar{x}_{-t} - w_{t-1})^2 \right) / \left( 2 \beta \right).
\end{align*}

Thus, we have the following bound for $P_r$ at any $k$:
\begin{align}
    P_r &= \int_{w_t-p}^{w_t+p} P(x \notin S_{k} |x ) f(x) dx \nonumber \\
    &\le \int_{w_t-p}^{w_t+p} O(x^2/p^2) f(x) dx.
    \label{eq:Prejbound}
\end{align}

Therefore, as $N$ increases, $1-P_r$ grows faster than $1-\mathbb{E}X_1/w_t = 1-\int_{w_t-p}^{w_t+p} x^2 f(x) dx$.
This means that $\exists p_0$, and thus $\exists n_0$, such that for $n>n_0$, and thus $p>p_0$, \eqref{eq:tmeanmu0} holds.

So far we assumed $\mathbb{E}X_r = \mu$. Clearly, for larger values of $\mathbb{E}X_r$, $\mathbb{E}X_a$ is even smaller and \eqref{eq:condconv} is satisfied.

For the case when $\mathbb{E}X_r < \mu$, consider the worst case scenario $\mathbb{E}X_r = w_t - p$ (this is the bound to $X_1$, and thus to $X_r$, imposed by $S_1$). Similarly to \eqref{eq:Exa}, we need:
\begin{align*}
   &  \mathbb{E}X_a < w_t  \\
   \Rightarrow& \mathbb{E}X_1/(1-P_r) - P_r (w_t - p)/(1-P_r) < w_t \\
   \Rightarrow& \mathbb{E}X_1/w_t - P_r + P_r p / w_t < 1 - P_r \\
   \Rightarrow& P_r p / w_t < 1 - \mathbb{E}X_1/w_t \\
   \Rightarrow& P_r p / w_t < \int_{w_t-p}^{w_t+p} x^2 f(x) dx.
\end{align*}
However, based on \eqref{eq:Prejbound}, $P_r p / w_t \le \int_{w_t-p}^{w_t+p} O(x^2/p)f(x)dx$, so again $\exists n>n_0$ such that $\mathbb{E}X_a<w_t$, and condition \eqref{eq:condconv} holds.

And so overall $\mathbb{E}(x_t | \hat{\delta}_t=1)$ satisfies condition \eqref{eq:condconv}. Since the distribution of accepted $x_t$ still has bounded support imposed by $S_1$, condition \eqref{eq:condeigen} still holds, and the learning rate condition \eqref{eq:condlr} holds as before, implying convergence.

\end{proof}

\subsection{When the global mean of segments does not match the background mean}
Consider now $\theta \sim \mathcal{N}(\mu, \tau^2)$ for some $\mu \neq w^*$, in particular $\mu >w^*$, so that the overall mean of segment points is $\mathbb{E}(\theta)>w^*$. Then if $w^* < w_t < \mathbb{E}(\theta)$, any segment points that were misclassified as background will (on average) push the estimates away from the background mean, in violation of the ``convexity'' condition \eqref{eq:condconv}.

We assume that each segment point is followed by no less than $n$ background points. Then, as $n \rightarrow \infty$,  $w_t \xrightarrow{a.s} w^*$. For every finite $n$, $\exists \epsilon>0$ such that $P(|w_t - w^*| > \epsilon) = 0 $.

\begin{proof}
Suppose that a misclassification at time $T$ is followed by $n$ correctly classified background points: $\delta_T = 0$, $\delta_t = 1$ for $t \in [T+1; T+n]$, $\hat{\delta}_t = 1$ for $t \in [T; T+n]$. For the points $t \in [T+1; T+n]$, almost sure convergence of $w_t$ was established above, i.e. for all $\epsilon >0$, there exists a $t_0$ such that $\forall t: n \ge t \ge t_0, P(|w_{T+t} - w^*| < \epsilon) = 1$. Therefore, given $n \ge t_0$:
\begin{align}
    \label{eq:primeseqconv}
    & P(|w_{T+n} - w^*| < |w_{T-1} - w^*| ) = 1 \nonumber \\
    \Rightarrow & \begin{cases} P(w_{T+n} - w_{T-1} < 0) = 1, \text{  if } w_{T-1}-w^* > 0 \\
        P(w_{T+n} - w_{T-1} > 0) = 1, \text{  if } w_{T-1} - w^* < 0
    \end{cases} \nonumber \\
    \Rightarrow & \inf_{w_{T-1} \neq w^*} (w_{T-1}-w^*) \mathbb{E}(w_{T+n} - w_{T-1}) < 0.
\end{align}
    
Indexing the segment-background cycles by $i$, denote the first estimate of that segment by $w'_i$, so the set of these estimates are:

    \[ \{w'_i\} = \{w_1, \dots, w_{T-2-n}, w_{T-1}, w_{T+n}, w_{T+1+2n}, \dots\}. \]

The elements of this sequence can be expressed recursively as:
\[ w'_{i+1} = w'_i - \gamma'_i H'(\{x'_i\}, w'_i), \]
with $\{x'_i\} = \{x_t: i(n+1) \le t \le i(n+1)+n \}$.

From \eqref{eq:primeseqconv}, $\mathbb{E}(w'_i - w'_{i+1}) = \gamma'_i \mathbb{E}H'(\{x'_i\}, w'_i)$ is ``convex'' as defined in \eqref{eq:condconv}, and because $\gamma'_i>0$ so is $\mathbb{E}H'(\{x'_i\}, w'_i)$.

Let $\gamma'_i = \frac{1}{i(n+1)+1}$. Then:
\begin{align*}
    H'(\{x'_i\}, w'_i) &= \sum_{t=i(n+1)}^{i(n+1)+n} \gamma_t (w_{t-1} - x_t) / \gamma'_i \\
    &= \sum_{t=i(n+1)}^{i(n+1)+n} \frac{i(n+1)+1}{t+1} (w_{t-1} - x_t) \\
    &< (n+1) (w_{t-1} - x_t).
\end{align*}
So for $n < \infty$, conditions \eqref{eq:condlr}--\eqref{eq:condeigen} are satisfied as well, and $w'_i \xrightarrow{a.s.} w^*$.
(When $n \to \infty$, the convergence conditions are satisfied directly without using the sequence $\{w'_i\}$.)
\end{proof}
    
\subsection{Martingale Approach}
\label{sectmartingales}

We can also describe the update process over the background points using martingales. The algorithm estimates are random variables $w_t$; let $\{\mathcal{W}_t\}$ be the sequence of $\sigma$-algebras such that for each $t$, $w_t$ is measurable with respect to $\mathcal{W}_t$. Using Lemma \ref{le:meantr}, and assuming $w^* < w_t$ again, within each cycle the estimates comprise a supermartingale $\mathbb{E}(w_{t+1} | \mathcal{W}_t) < w_t$ over the points $T \le t < \min(T+n, FHT_w(w^*))$, here $FHT_x(a) = \inf \{t : x_t\le a\}$ is the first hitting time of the process realisation $\{x_t\}$ to value $a$.

Consider again the problematic case when the global mean does not match the background mean and misclassification pushes the estimate away from the background mean, i.e. $w^* < w_{T-1} < w_T < \mathbb{E}(\theta)$.
In order for $w'_i$ to converge, we need the perturbed estimates to return to a value below $w_{T-1}$ in each cycle. At the extremes, we have:
\begin{align*}
    & FHT_w(w_{T}) = T && \text{starting position} \\
    & FHT_w(w_{T-1}) < \infty && \text{for sufficiently large $n$, because } w_t \xrightarrow{a.s.} w^*.
\end{align*}

Clearly, $FHT_w(w_{T-1}) \le FHT_w(w^*)$. However, the number of background points $n$ required to satisfy $FHT_w(w_{T-1}) < T+n$ will depend on five factors: the distribution $f_B$ and penalty $p$ (since they determine the distribution of update values $H$), the size of estimated background set at time $T$ (as it determines the relevant $\gamma_t$), and $w_{T-1}$ and $w_T$.

In practice, $n$ is bounded by the available data, so there is a non-zero probability that, over the segment-background cycles indexed by $i$:
\[ \max_i FHT_w(w_{T_i-1}) > T_i+n. \]

In that case, define $b = \min \{a: \forall i, FHT_w(a) \le T_i+n\}$; the final estimate of $w'_i$ will be bound by $[w^*, b]$. As $n$ increases, $P(|w^*-b|>\epsilon) \rightarrow 0$ for any $\epsilon>0$.

Similar reasoning applies when $\mathbb{E}(\theta) < w_T < w_{T-1} < w^*$.

\section{Proof of Theorem \ref{thm:consist} (Consistency)}
\label{app:consistency}

\begin{proof}
Some general consistency results for changepoint detection by penalised cost methods are given in \citet{CAPA2018}. In particular, an equivalent of our Theorem \ref{thm:consist} is established for any algorithm that detects changepoints by exactly minimising a cost $F(n; \{s_i, e_i\}, \theta, \hat{\mu}, \hat{\sigma})$, where $\hat{\cdot}$ marks robust estimates of background parameters. While the original statement uses the median and interquartile range of $x_{0:n}$ for $\hat{\mu}$ and $\hat{\sigma}$, the proof only requires that the estimates satisfy certain upper bounds on deviations from the true values. Therefore, we will first show that the online estimates produced by Algorithm \ref{algshort} are within these bounds, and then follow the rest of the proof from \citet{CAPA2018}.

Noting again that Algorithm \ref{algshort} is effectively a stochastic gradient descent procedure, with each data point seen precisely once, we can use the error bound on estimates produced by such algorithms as provided in Theorem 7.5 of \citet{SgdTailProb2018}:
\begin{theorem}[\citep{SgdTailProb2018}]
    Let function $f(w)$ be 1-strongly convex and 1-Lipschitz. A stochastic gradient algorithm for minimising this function runs for $T$ cycles, and at each cycle updates the estimate as in (\ref{eq:sgd}) with $\gamma_t = 1/t$, $\mathbb{E}H = \nabla f(w)$. Then:
\[ P\left( \lVert w_t - w^* \rVert ^2 \le O \left(\frac{\log (1/\delta)}{t} \right) \right) \ge 1-\delta. \]
\end{theorem}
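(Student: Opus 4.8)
The plan is to prove this high-probability bound in two stages: first recover the standard in-expectation rate $\mathbb{E}\|w_t - w^*\|^2 = O(1/t)$ for strongly convex stochastic gradient descent, and then upgrade it to a tail bound with the logarithmic $\log(1/\delta)$ dependence via a martingale concentration argument. Throughout, write $r_t = \lVert w_t - w^* \rVert^2$, let $g_t = H(x_{t+1}, w_t)$ denote the stochastic gradient, and let $\mathcal{F}_t$ be the $\sigma$-algebra generated by the first $t$ updates.

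First I would set up the one-step recursion. Expanding the update \eqref{eq:sgd} gives
\[ r_{t+1} = r_t - 2\gamma_t \langle g_t, w_t - w^* \rangle + \gamma_t^2 \|g_t\|^2 . \]
Taking the conditional expectation and using $\mathbb{E}[g_t \mid \mathcal{F}_t] = \nabla f(w_t)$, together with $1$-strong convexity (which, since $\nabla f(w^*)=0$, gives $\langle \nabla f(w_t), w_t - w^* \rangle \ge \|w_t - w^*\|^2 = r_t$) and the $1$-Lipschitz bound $\|g_t\| \le 1$, yields
\[ \mathbb{E}[r_{t+1} \mid \mathcal{F}_t] \le (1 - 2\gamma_t) r_t + \gamma_t^2 . \]
With $\gamma_t = 1/t$ this is $\mathbb{E}[r_{t+1}\mid\mathcal{F}_t] \le (1 - 2/t) r_t + 1/t^2$, and a routine induction on the unconditional version gives $\mathbb{E}[r_t] \le C/t$ for a constant $C$.

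Second I would upgrade this to a tail statement. The obstacle is that the in-expectation bound only yields, via Markov's inequality, a polynomial tail $P(r_t \ge C/(t\delta)) \le \delta$, whereas the claimed rate needs exponential concentration. To obtain it, I would isolate the martingale-difference part of the recursion, $\xi_t = \langle g_t - \nabla f(w_t), w_t - w^* \rangle$, which satisfies $\mathbb{E}[\xi_t \mid \mathcal{F}_t] = 0$, so that pathwise
\[ r_{t+1} \le (1 - 2\gamma_t) r_t - 2\gamma_t \xi_t + \gamma_t^2 . \]
Unrolling this recursion expresses $r_t$ as a deterministic, $O(1/t)$-decaying term plus a weighted sum of the increments $\xi_s$. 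The relevant estimates are $|\xi_s| \le 2\sqrt{r_s}$ and $\mathrm{Var}(\xi_s \mid \mathcal{F}_s) \le 4 r_s$, both following from Cauchy--Schwarz and the gradient bound, so the increment sizes and the predictable quadratic variation of the noise martingale are controlled by the very quantity $r_s$ being bounded.

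The hard part will be closing this self-referential loop, since the concentration of the noise depends on $\sum_s \gamma_s^2 r_s$, which in turn depends on the trajectory of $r_s$. I would resolve this with a Freedman- or Bernstein-type martingale inequality whose deviation bound is stated in terms of the predictable variation rather than a worst-case increment; substituting the already-established $\mathbb{E}[r_s] = O(1/s)$ to control the accumulated variation, and if necessary bootstrapping over dyadic scales $t/2 \le s \le t$ with a union bound, produces a deviation of order $\sqrt{\log(1/\delta)}$ times the noise scale. Combining the deterministic $O(1/t)$ drift with this concentrated noise term then gives $r_t \le O(\log(1/\delta)/t)$ on an event of probability at least $1-\delta$. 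I expect the variance-dependent martingale inequality, not the elementary recursion, to be the technical crux.
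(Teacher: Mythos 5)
This statement is not proved in the paper at all: it is imported verbatim (as Theorem 7.5) from \citet{SgdTailProb2018}, so your proposal can only be compared against that source's argument, not against anything in the appendices. Your first stage is correct and standard: the one-step recursion, the use of strong convexity with $\nabla f(w^*)=0$, and the induction giving $\mathbb{E}\lVert w_t - w^*\rVert^2 = O(1/t)$ are all fine (modulo the remark that $1$-strong convexity and $1$-Lipschitzness are incompatible on an unbounded domain, so the setting is really projected SGD on a bounded set, and the Lipschitz hypothesis must be read as an almost-sure bound on the stochastic gradients rather than only on $\nabla f$; the recursion survives projection by non-expansiveness). You have also correctly located the crux: the predictable quadratic variation of the noise martingale is controlled by the very quantity $r_s$ being bounded.

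The gap is in how you propose to close that loop. Freedman's inequality requires an almost-sure, pathwise bound on the predictable quadratic variation; ``substituting the already-established $\mathbb{E}[r_s]=O(1/s)$'' is not a legitimate move, since an in-expectation bound converts to a pathwise one only through Markov's inequality, which gives polynomial rather than logarithmic dependence on $1/\delta$ and destroys the claimed rate. The standard repair, essentially your fallback, is a stopping-time/induction argument over dyadic blocks with a union bound (and it must run over all blocks up to $t$, not just $t/2 \le s \le t$, because the unrolled recursion involves every earlier increment); with uniform error allocation across the $\log t$ blocks this yields $O\left((\log(1/\delta)+\log\log t)/t\right)$ --- the bound of Rakhlin, Shamir and Sridharan (2012) --- not the stated $O\left(\log(1/\delta)/t\right)$. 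Removing that $\log\log t$ factor is precisely the technical contribution of \citet{SgdTailProb2018}: they prove a generalized Freedman-type inequality that tolerates a variance bound which is itself a function of the martingale being controlled (their resolution of this self-referential ``chicken-and-egg'' problem), and that tool, or alternatively a careful stitching argument with geometrically decaying error allocation across blocks, is what your outline is missing. For this paper's purposes the distinction is immaterial, since the theorem is applied with $\delta = n^{-\epsilon}$, where an extra $\log\log n$ would be absorbed into $O(\log n)$; but as a proof of the statement as written, your plan falls short at exactly the step you flagged as the technical crux.
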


Using $\delta = n^{-\epsilon}$, and assuming without loss of generality that $\sigma_0 = 1$, we can establish an upper bound on the error of background parameters estimated by Algorithm \ref{algshort} after $n$ cycles:
\[ P\left( (\hat{\mu} - \mu_0)^2 \le O\left( \frac{\log (n^{\epsilon})}{n} \right) \right) = P\left( |\hat{\mu} - \mu_0| \le O\left(\sqrt{\epsilon}\sqrt{\frac{\log n}{n}} \right) \right) \ge 1-n^{-\epsilon} \]
\[ P\left( |\hat{\sigma}^2 - \sigma^2_0| \le O\left( \sqrt{\epsilon}\sqrt{\frac{\log n}{n}} \right) \right) \ge 1- n^{-\epsilon}. \]
Application of Boole's inequality leads to:
\begin{equation}
    P\left(|\hat{\mu} - \mu_0| \le D_1 \sigma_0 \sqrt{\frac{\log(n)}{n}}, \left| \frac{\hat{\sigma}^2}{\sigma^2_0} - 1 \right| \le D_2  \sqrt{\frac{\log(n)}{n}}\right) \ge 1 - C_1 n^{-\epsilon},
    \label{eq:event8}
\end{equation}
for some constants $C_1, D_1, D_2$ and sufficiently large $n$.
Since the objective function $f$ in our algorithm is the Gaussian log-likelihood (i.e., the updates $\mathbb{E}H$ approximate its gradient), for any given segmentation it is 1-strongly convex. For other functions, overall consistency can still be achieved similarly, but the convergence rate may be slower than $n^{-\epsilon}$.

Having established the bound on estimate errors, we can use Lemma 9 from \citet{CAPA2018} and the proof method reported there.

First, introduce an event $E$ based on a combination of bounds limiting the behaviour of Gaussian data $x_{1:n}$, which for any $\epsilon>0$, occurs with probability $P(E) > 1 - C_2 n^{-\epsilon}$, with some constant $C_2$ and sufficiently large $n$ (Lemmas 1 and 2 in \citet{CAPA2018}). Conditional on this event, the following lemma holds for the epidemic cost $F$ defined as in \eqref{fullcostepid}:

\begin{lemma}[{\citep{CAPA2018}}]
    \label{le:capa9}
    Let $\{\tau\}$ be the set of true segment positions $\{(s_i, e_i)\}$, and $\theta$ the vector of true segment means. Assume $E$ holds, and some $\hat{\mu}, \hat{\sigma}$ are available for which the event in \eqref{eq:event8} holds. Then, there exist constants $C_3$ and $n_1$ such that when $n>n_1$,
    \[ F(n; \{\tau\}, \theta, \hat{\mu}, \hat{\sigma}) - F(n; \{\tau\}, \theta, \mu_0, \sigma_0) < C_4 \log n. \]
\end{lemma}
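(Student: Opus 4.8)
The plan is to note that the difference being bounded involves only the terms of $F$ that depend on the background parameters, so that the penalty $p(k)$ and the discrete segmentation cancel entirely, leaving a sum of per-point Gaussian log-likelihood differences. Writing $F$ as in \eqref{fullcostepid} with segment and background costs equal to negative log-likelihoods, the quantity $F(n; \{\tau\}, \theta, \hat{\mu}, \hat{\sigma}) - F(n; \{\tau\}, \theta, \mu_0, \sigma_0)$ splits into a sum over the $N_B$ background points, where both the mean and the shared scale are perturbed from $(\mu_0,\sigma_0)$ to $(\hat{\mu},\hat{\sigma})$, and a sum over the segment points, where only the scale is perturbed (the segment means $\theta_i$ are held at their true values). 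This immediately specialises to the two cases of Theorem \ref{thm:consist} by setting the variance or the mean perturbation to zero. The whole argument is conditioned on the high-probability set where both \eqref{eq:event8} and the event $E$ hold, so that $|\hat{\mu}-\mu_0|=O(\sigma_0\sqrt{\log n / n})$, $|r-1|=O(\sqrt{\log n/n})$ with $r=\hat{\sigma}^2/\sigma_0^2$, and the centred partial sums $\sum(x_t-\mu_0)$ and $\sum((x_t-\mu_0)^2-\sigma_0^2)$ over any interval are controlled at the $\sqrt{n\log n}$ scale by Lemmas 1--2 of \citet{CAPA2018}.

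First I would handle the segment-point contribution. For one point the difference is $\tfrac12\log r + \tfrac12(x_t-\theta_i)^2(\hat{\sigma}^{-2}-\sigma_0^{-2})$; summing over segment $i$ and using $E$ to write $\sum(x_t-\theta_i)^2 = n_i\sigma_0^2$ up to an $O(\sigma_0^2\sqrt{n_i\log n})$ fluctuation, the leading part becomes $\tfrac{n_i}{2}(\log r + r^{-1}-1)$. The decisive fact is that $\log r + r^{-1} - 1 = \tfrac12(r-1)^2 + O((r-1)^3)$ vanishes to first order at $r=1$, so this leading part is $O(n_i\,(\log n)/n)$; summing over the (fixed number of) segments gives $O(\log n)$, and a Cauchy--Schwarz bound shows the scale fluctuation contributes $O(\log n)$ as well. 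Then I would treat the four background terms obtained by expanding $(x_t-\hat{\mu})^2=(x_t-\mu_0)^2-2\Delta_\mu(x_t-\mu_0)+\Delta_\mu^2$ with $\Delta_\mu=\hat{\mu}-\mu_0$: the combined log-and-reciprocal-variance term is again $O(N_B(r-1)^2)=O(\log n)$ by the same cancellation; the cross term $\hat{\sigma}^{-2}\Delta_\mu\sum(x_t-\mu_0)$ is $O(\sqrt{\log n/n})\cdot O(\sqrt{n\log n})=O(\log n)$ using the $E$-bound on the centred background sum; the pure-mean term $\tfrac{N_B\Delta_\mu^2}{2\hat{\sigma}^2}$ is $O(\log n)$ since $\Delta_\mu^2=O((\log n)/n)$ and $N_B\le n$; and the variance-fluctuation correction is $O(\log n)$ as in the segment case. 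Collecting all pieces yields the bound $C_4\log n$ for $n$ exceeding some $n_1$, with $C_4$ depending on $D_1,D_2,\sigma_0$, the number of segments, and the constants defining $E$.

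The hard part is the near-cancellation in the variance terms. A naive bound that keeps $\tfrac{N_B}{2}\log r$ and the reciprocal-variance term separately gives $O(N_B|r-1|)=O(\sqrt{n\log n})$, which overshoots the target $O(\log n)$ by a factor of $\sqrt{n/\log n}$; the estimate is only salvaged because the first-order contributions cancel, leaving the quadratic $O((r-1)^2)$ remainder, and because the term linear in $\Delta_\mu$ is a \emph{centred} partial sum that $E$ controls at the $\sqrt{n\log n}$ rather than the $n$ scale. I would therefore make both cancellations explicit through second-order Taylor expansions of $\log r$ and $r^{-1}$ about $r=1$ and of $(x-\hat{\mu})^2$ about $\hat{\mu}=\mu_0$, carrying uniform remainder bounds that are valid on the intersection of the events in \eqref{eq:event8} and $E$, so that every residual term is provably $O(\log n)$ simultaneously. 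Once this is in place, the remaining bookkeeping is routine and the stated inequality follows.
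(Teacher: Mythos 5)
Your proposal is correct, but it is worth noting that the paper itself does not prove this lemma at all: it is imported verbatim from \citet{CAPA2018} (their Lemma 9), and the appendix explicitly defers with ``see the original publication for a detailed proof of these results.'' So you have effectively reconstructed the missing argument rather than paralleled one. Your reconstruction is sound and hits the two points on which the result genuinely turns: (i) after cancelling the penalty and splitting the cost difference into background points (mean and scale perturbed) and segment points (scale only, since $\theta$ is held at its true value), the variance terms must be combined into $\tfrac{n_i}{2}\left(\log r + r^{-1} - 1\right)$, whose first-order terms cancel so that the contribution is $O\!\left(n\,(r-1)^2\right) = O(\log n)$ rather than the useless $O(n|r-1|) = O(\sqrt{n\log n})$ from bounding $\log r$ and $r^{-1}$ separately; and (ii) the term linear in $\Delta_\mu = \hat{\mu}-\mu_0$ must be paired with the \emph{centred} partial sum $\sum (x_t-\mu_0)$, which event $E$ (Lemmas 1--2 of \citet{CAPA2018}) controls at the $\sqrt{n\log n}$ scale, giving $O(\sqrt{\log n/n})\cdot O(\sqrt{n\log n}) = O(\log n)$. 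The remaining terms ($N_B\Delta_\mu^2/\hat{\sigma}^2$ and the fluctuation corrections) are routine, as you say; your constant correctly absorbs $D_1, D_2, \sigma_0$ and the (fixed) number of segments, the latter needed because the per-segment fluctuation terms sum to $O(\sqrt{k}\log n)$. What the paper's approach buys is brevity and an appendix that leans on an established reference; what yours buys is a self-contained, verifiable argument in this paper's own notation, which also makes transparent exactly which features of the estimator error rates in \eqref{eq:event8} (the $\sqrt{\log n/n}$ rate for both parameters) are what make the $C_4\log n$ bound achievable.
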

This lemma, together with results established for classical changepoint detection, can be used to show that the cost of any inconsistent solution will exceed the cost based on true segment positions and parameters (Proposition 8 in \citet{CAPA2018}):
\begin{prop}[\citep{CAPA2018}]
    \label{prop:capa8}
    Define $\{\tau'\}$ to be any set of segments $\{(s_i, e_i)\}$ that does not satisfy the consistency event in (\ref{eq:consist}). Let $\tilde{\theta} = \argmin_{\theta} F(n; \theta) $ be the parameters estimated by minimising the cost for a given segmentation (i.e. the vector of means and/or variances of $x_{s_i:e_i}$ for each $i$). Assume $E$ holds. Then there exist constants $C_4$ and $n_2$ such that, when $n>n_2$:
    \[ F(n; \{\tau'\}, \tilde{\theta}, \hat{\mu}, \hat{\sigma}) \ge F(n; \{\tau\}, \theta, \hat{\mu}, \hat{\sigma}) + C_3 \log(n)^{1+\delta/2}. \]
\end{prop}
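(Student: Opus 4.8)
The plan is to condition on two high-probability events and then reduce the problem to the known-background case treated in \citet{CAPA2018}. First I would intersect the data-regularity event $E$, for which $P(E) > 1 - C_2 n^{-\epsilon}$, with the estimate-accuracy event of \eqref{eq:event8}; since each holds with probability at least $1 - Cn^{-\epsilon}$, Boole's inequality guarantees their intersection also holds with probability $1 - C'n^{-\epsilon}$ for a suitable constant. On this intersection the online estimates $\hat\mu, \hat\sigma$ deviate from $\mu_0, \sigma_0$ by at most $O(\sqrt{\log(n)/n})$, which is precisely the regime in which the CAPA machinery was designed to operate. My aim throughout is therefore to establish the cost gap first with the true background parameters $\mu_0, \sigma_0$, and only at the end to absorb the error incurred by substituting $\hat\mu, \hat\sigma$.

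For the core inequality with true parameters, I would partition the inconsistent segmentations $\{\tau'\}$ according to the way they violate \eqref{eq:consist}, and bound the excess cost in each case. First, if $\{\tau'\}$ omits a true segment, the Gaussian log-likelihood cost of fitting that region with a background model exceeds the correct cost by at least $(e_i - s_i)\Delta_i$ minus a fluctuation term; the signal-to-noise assumption \eqref{eq:snrassumpt} makes the leading term at least $\log(n)^{1+\delta}$, and under $E$ the fluctuation is $O(\sqrt{(e_i-s_i)\Delta_i \log n})$, so the net excess dominates $C_3\log(n)^{1+\delta/2}$. Second, if $\{\tau'\}$ introduces a spurious segment in a homogeneous region, the improvement in data fit is at most an $O(\log n)$ fluctuation under $E$, whereas the per-segment penalty contributes $\beta = \alpha\log(n)^{1+\delta}$; the strengthened penalty exponent is exactly what makes this term dominate. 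Third, if the number of changepoints is correct but a boundary is displaced by more than $\frac{A}{\Delta_i}\log(n)^{1+\delta}$, the points misattributed across that boundary contribute an extra squared-error cost growing linearly in the displacement at rate $\Delta_i$, so choosing $A$ large forces the deterministic gain to beat the $O(\sqrt{\cdot})$ fluctuation and again exceed $C_3\log(n)^{1+\delta/2}$. Summing over the at most $k$ local discrepancies preserves the bound.

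Finally I would transfer from true to estimated parameters using Lemma~\ref{le:capa9}: on the same good event it gives $|F(n;\cdot,\hat\mu,\hat\sigma) - F(n;\cdot,\mu_0,\sigma_0)| = O(\log n)$ for both the true segmentation and any candidate $\{\tau'\}$. Since $\log n$ is of strictly smaller order than $\log(n)^{1+\delta/2}$, the gap established in the previous step survives the substitution up to a redefinition of $C_3$ and a possible increase of the threshold $n_2$. The main obstacle, as in the original argument, is the boundary-displacement case: it requires sharp, uniform control of partial sums of the Gaussian observations so that the fluctuation terms are genuinely $O(\sqrt{\text{displacement}\cdot\log n})$ uniformly over all admissible misplacements, which is the content of event $E$; getting the constant $A$ and the penalty exponent to interlock so that every case clears the same $\log(n)^{1+\delta/2}$ margin is the delicate bookkeeping that ties the proof together.
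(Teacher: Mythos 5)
The first thing to note is that the paper does not actually prove this proposition: it is imported wholesale from \citet{CAPA2018} (their Proposition 8), stated here with attribution, and the text explicitly refers the reader to ``the original publication for a detailed proof.'' So there is no internal proof to compare against, and your proposal must be judged as a reconstruction of the cited argument. In outline it is faithful to that argument: condition on $E$ together with the estimate-accuracy event \eqref{eq:event8}; split the inconsistent segmentations $\{\tau'\}$ by how they violate \eqref{eq:consist} (missed segment, spurious segment, displaced boundary); use the signal-to-noise assumption \eqref{eq:snrassumpt} and the strengthened penalty $\alpha\log(n)^{1+\delta}$ to beat the fluctuation terms permitted under $E$; and only then account for replacing $(\mu_0,\sigma_0)$ by $(\hat\mu,\hat\sigma)$. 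That is the right shape, and it is the shape of the original proof.

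There is, however, one concrete gap. Lemma~\ref{le:capa9}, as stated here and in \citet{CAPA2018}, controls the parameter-substitution error only for the \emph{true} segmentation with the \emph{true} segment parameters: $F(n;\{\tau\},\theta,\hat\mu,\hat\sigma)-F(n;\{\tau\},\theta,\mu_0,\sigma_0)<C_4\log n$. Your final step invokes it ``for both the true segmentation and any candidate $\{\tau'\}$,'' which the lemma does not give, and the uniform extension is not free. What you need for candidates is a lower bound on $F(n;\{\tau'\},\tilde\theta,\hat\mu,\hat\sigma)$ in terms of $F(n;\{\tau'\},\tilde\theta,\mu_0,\sigma_0)$, uniformly over \emph{all} inconsistent $\{\tau'\}$, including those with many segments. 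The substitution error lives on the candidate's background set $B$; after the cancellation between the $\log\hat\sigma^2$ term and the quadratic term, it is of order $|\eta|$ times the deviation of $\sum_{i\in B}(x_i-\mu_0)^2/\sigma_0^2$ from $|B|$ (plus an analogous mean term), with $|\eta|=O(\sqrt{\log n/n})$. If $B$ is a union of $m+1$ intervals, the deviation allowed under $E$ scales like $\sqrt{(m+1)\,n\log n}$ by Cauchy--Schwarz over the intervals, so the substitution error is $O(\sqrt{m}\log n)$, not $O(\log n)$. It is still dominated, but only because each extra candidate segment incurs an extra $\alpha\log(n)^{1+\delta}$ of penalty; the bookkeeping must therefore couple the substitution error to the candidate's penalty rather than absorb it as a uniform $O(\log n)$ correction. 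Supplying that coupling (or an equivalent uniform argument) is exactly the technical content your sketch is missing; with it, the rest of your case analysis goes through.
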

See the original publication for a detailed proof of these results.

Finally, for a given set of changepoints, using fitted maximum-likelihood parameters by definition results in minimal cost:
\[ F(n; \{\tau\}, \theta, \hat{\mu}, \hat{\sigma}) \ge F(n; \{\tau\}, \tilde{\theta}, \hat{\mu}, \hat{\sigma}). \]
Thus, when Proposition \ref{prop:capa8} holds, we have:

\[ F(n; \{\tau'\}, \tilde{\theta}, \hat{\mu}, \hat{\sigma}) > F(n; \{\tau\}, \tilde{\theta}, \hat{\mu}, \hat{\sigma}), \]
and an exact minimisation algorithm will always find a solution in the consistent set. 
The overall probability of the events required for Proposition \ref{prop:capa8} is a combination of $P(E)$, established before, and \eqref{eq:event8}, which by Boole's inequality is:
\[ P > 1-C_5 n^{-\epsilon}, \]
for any $\epsilon>0$, $n>n_3$ and some constants $n_3, C_5$.
\end{proof}

\newpage
\section{Proof of Proposition \ref{prop:newpruning} (Pruning of Algorithm \ref{algfull})}
\label{app:pruning}

\begin{proof}
    Denote the true start and end of a nuisance segment as $s_j, e_j$. Consider the case when $s_j \in (t - A(n); t]$. Pruning at time $t$ will not remove this point (i.e. $s_j \notin \bm{k}_{pr,t}$) iff:
    \[ C^0(x_{t-A(n): s_j-1}) + C^N(x_{s_j:t}) < C^0(x_{t-A(n): m-1}) + C^N(x_{m:t}) + \alpha \log(n)^{1+\delta} \]
    with $m$ such that the right hand side is minimised and $m \neq s_j$.

    Denote by $C(x_{a:b}; \hat{\mu}, \hat{\sigma})$ the Gaussian cost calculated with MLE estimates of the parameters (i.e. mean and variance of $x_{a:b}$).
    Note that since $C^0(x_{a:b}) = C(x_{a:b}; \mu_0, \sigma_0)$ and $C^N(x_{a:b}) = C(x_{a:b}; \hat{\mu}, \sigma_N)$, the required event to preserve $s_j$ can be stated as
    \begin{multline*}
        C(x_{t-A(n): s_j-1}; \mu_0, \sigma_0) + C(x_{s_j:t}; \hat{\mu}, \sigma_N) - \\
         C(x_{t-A(n): m-1}; \mu_0, \sigma_0) - C(x_{m:t}; \hat{\mu}, \sigma_N) < \alpha \log(n)^{1+\delta}
    \end{multline*}

    We can establish the probability of this using the following bound (Proposition 4 in \citet{CAPA2018}):
    \begin{lemma}
        \label{le:capa4}
        Let $x_{1:n}$ be piecewise-Gaussian data. Choose any subset $x_{i:j}, 1\le i < j \le n$, with a true changepoint at $s$, i.e., we have $x_t \sim \mathcal{N}(\mu_1, \sigma_1)$ for $t \in [i; s-1]$, and $x_t \sim \mathcal{N}(\mu_2, \sigma_2)$ for $t \in [s; j]$. Then, for any candidate changepoint $\tau$ and any $\epsilon > 0$, there exist constants $B, n_0, K_1$ such that:
        \[ C(x_{i:s-1}; \mu_1, \sigma_1) + C(x_{s:j}; \mu_2, \sigma_2) - C(x_{i:\tau-1}; \hat{\mu}, \hat{\sigma}) - C(x_{\tau:j}; \hat{\mu}, \hat{\sigma}) \le K_1 \log(n) \]
        is true for all $i, j$ with $P \ge 1 - Bn^{-\epsilon}$ when $n>n_0$.
    \end{lemma}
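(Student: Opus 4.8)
This statement is Proposition~4 of \citet{CAPA2018}, so reproducing it is really a matter of re-running their argument; I sketch the route I would take. I would first write the Gaussian segment cost explicitly as $C(x_{a:b};\mu,\sigma)=\frac{\ell}{2}\log(2\pi\sigma^2)+\frac{1}{2\sigma^2}\sum_{t=a}^{b}(x_t-\mu)^2$ with $\ell=b-a+1$, so that the minimised cost collapses to $C(x_{a:b};\hat\mu,\hat\sigma)=\frac{\ell}{2}\log(2\pi e\,\hat\sigma^2)$. The quantity to bound is the difference $D$ between the true-parameter cost on the true two-piece partition and the minimised cost on the candidate partition split at $\tau$. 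Because the candidate term is itself a minimisation over parameters, substituting reference parameters only bounds $D$ from below, the wrong direction; instead I would use the super-additivity property \eqref{eq:pruning} of the MLE cost to \emph{refine} the candidate partition so that it also contains the true changepoint $s$. Assuming $\tau\le s$ without loss of generality (the case $\tau\ge s$ is symmetric), this lower-bounds the candidate cost by the MLE cost on the three pieces $[i,\tau-1]$, $[\tau,s-1]$ (both pure regime~1) and $[s,j]$ (pure regime~2), reducing everything to comparisons between pieces generated by a single Gaussian.

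Next I would bound each piece. For a pure regime-$r$ segment, the excess of the true-parameter cost over the MLE cost equals $\frac{\ell}{2}(u-1-\log u)+\frac{\ell(\bar x-\mu_r)^2}{2\sigma_r^2}$, where $u=\hat\sigma^2/\sigma_r^2$ satisfies $\ell u\sim\chi^2_{\ell-1}$ and $\sqrt{\ell}(\bar x-\mu_r)/\sigma_r$ is standard normal. Both terms are nonnegative, and standard chi-squared and Gaussian tail inequalities show each is at most $K\log n$ with a failure probability that is only polynomially small in $n$. The one genuinely new contribution is the \emph{pooled-minus-split} MLE cost on the homogeneous regime-1 data, $C(x_{i:s-1};\hat\mu,\hat\sigma)-C(x_{i:\tau-1};\hat\mu,\hat\sigma)-C(x_{\tau:s-1};\hat\mu,\hat\sigma)$, which is exactly the generalised likelihood-ratio statistic for a spurious changepoint at $\tau$ inside a single Gaussian segment and must be controlled uniformly over $\tau$. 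Since every term is nonnegative, summing the handful of pieces gives $D\le K_1\log n$ on the good event, and in particular shows that no candidate split can beat the true model by more than $O(\log n)$.

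Finally I would make the bound uniform. There are $O(n^3)$ triples $(i,j,\tau)$, so I would choose $K_1$ large enough, as a function of $\epsilon$, that each individual tail probability is $O(n^{-(\epsilon+3)})$; a union bound then yields total failure probability at most $Bn^{-\epsilon}$ for all $i,j$ simultaneously once $n>n_0$. The hard part will be precisely this uniformity: the scan-statistic term and the chi-squared deviations have to be controlled simultaneously over every subinterval, and very short segments (small $\ell$), where chi-squared concentration is weak, need separate handling — either by invoking a minimum-length constraint or by absorbing their deterministically bounded contribution into $K_1$. Once the per-interval tails decay polynomially with an exponent tunable through $K_1$, the $O(n^3)$ union bound is harmless, which is exactly what makes the $\log n$ scale in the statement the natural one.
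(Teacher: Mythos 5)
The paper does not prove this lemma at all: it is imported verbatim as Proposition 4 of \citet{CAPA2018}, with the reader referred to that publication, so there is no internal proof to compare yours against. Your reconstruction is faithful to how the cited result is actually established --- using super-additivity of the MLE cost to insert the true changepoint into the candidate split (the right fix for the direction-of-inequality issue you correctly identify), per-segment chi-squared/Gaussian excess bounds, uniform control of the spurious-split likelihood-ratio term, and a union bound over the polynomially many subintervals, which is precisely the role of the high-probability event constructed in Lemmas 1--2 of \citet{CAPA2018}. One caveat on your closing remark: for a length-one piece the joint MLE gives $\hat\sigma^2 = 0$ and hence an infinite cost difference, so that case must be excluded by a minimum segment length (as the cited work does); it is not a ``deterministically bounded contribution'' that can be absorbed into $K_1$.
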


    Now take $i=s_j-A(n)+1, j=s_j+A(n)-1$. Note that there is one and only one changepoint within $x_{i:j}$ because of the required distance between changepoints. Applying Lemma \ref{le:capa4} to such $x_{i:j}$ states that, conditional on an event with probability $P \ge 1 - Bn^{-\epsilon}$, the following is true for all $t \in [s_j, s_j+A(n))$:
    \begin{align*}
         & C(x_{t-A(n): s_j-1}; \mu_0, \sigma_0) + C(x_{s_j:t}; \hat{\mu}, \sigma_N) -
         C(x_{t-A(n): m-1}; \mu_0, \sigma_0) - C(x_{m:t}; \hat{\mu}, \sigma_N)  \\
         & \le C(x_{t-A(n): s_j-1}; \mu_0, \sigma_0)
         + C(x_{s_j:t}; \mu_N, \sigma_N) -
         C(x_{t-A(n): m-1}; \hat{\mu}, \hat{\sigma}) - C(x_{m:t}; \hat{\mu}, \hat{\sigma}) \\
          & \le K_1 \log(n) < \alpha \log(n)^{1+\delta},
    \end{align*}
    where we also used the fact that $\hat{\mu}, \hat{\sigma} = \argmin_{\mu_, \sigma} C(x; \mu, \sigma)$.

    Therefore, with the same probability, $s_j \notin \bigcup_{t=s_j}^{s_j+A(n)-1} \bm{k}_{pr,t}$. Also, $s_j \notin \bigcup_{t=s_j+A(n)}^{n} \bm{k}_{pr,t}$ because then $s_j \le t-A(n)$ and is not considered in the pruning scheme, and clearly $s_j \notin \bigcup_{t=1}^{s_j-1} \bm{k}_{pr,t}$.  
    The case for $e_j$ follows by symmetry, and obviously no true changepoint can be pruned out if $s_j, e_j = \emptyset$, so the overall probability of retaining a true changepoint remains at $P \ge 1 - Bn^{-\epsilon}$.
\end{proof}

\newpage

\section{Supplementary Table}
\setcounter{table}{0}
\renewcommand{\thetable}{S\arabic{table}}

\label{app:suppfigs}

\begin{table}[h]
    \caption{\label{tab:sim2prun} Comparison of Algorithm \ref{algfull} detections without pruning (Full) or with global pruning (Pruned). Of the 5000 simulation runs, all runs where the two options produced any difference in segments were identified. All detections from these runs were extracted and are shown here. Segment types are N -- nuisance, S -- signal.}
    \centering
    \begin{tabular}[htbp]{|lcc|crr|crr|}
        \hline
        & & & \multicolumn{3}{|c|}{(Full)} & \multicolumn{3}{|c|}{(Pruned)} \\
        Scenario & $n$ & Run number & Segm. type & Start  & End & Segm. type & Start & End \\ \hline
        1  &  30  &  88    &    N  &  7 &  21    &    S &   7 &  11 \\
           &  30  &  88    &    S  & 13 &  15    &    N &  12 &  21 \\
           &  30  &  88    &    S  & 16 &  20    &    S &  16 &  20 \\
           & 100  &  84    &    N  & 21 &  69    &    S &  21 &  50 \\
           & 100  &  84    &    S  & 51 &  68    &    S &  51 &  69 \\ \hline
        2  &     30 & 378    &    N &  10 &  18    &    S &  10 &  12 \\ 
           &     30 & 378    &    S &  13 &  15    &    S &  16 &  18 \\
           &     30 & 393    &    N &   2 &  12    &    N &   3 &  12 \\
           &     30 & 393    &    S &   4 &   7    &    S &   4 &   7 \\
           &     30 & 393    &    S &  16 &  18    &    S &  16 &  18 \\
           &     30 & 393    &    S &  22 &  24    &    S &  22 &  24 \\
           &    240 & 445    &    N &  55 &  95    &    N &  55 &  95 \\
           &    240 & 445    &    S & 121 & 144    &    S & 121 & 144 \\
           &    240 & 445    &    N & 170 & 218    &    S & 170 & 192 \\
           &    240 & 445    &    S & 193 & 214    &    S & 215 & 218 \\ \hline
    \end{tabular}
\end{table}


\end{document}